\documentclass[conference]{IEEEtran}
\IEEEoverridecommandlockouts

\usepackage{mathpazo}
\usepackage{times}
\usepackage{cite}
\usepackage{amsmath,amssymb,amsfonts}
\usepackage{algorithmic}
\usepackage{graphicx}
\usepackage{textcomp}
\usepackage{xcolor}
\usepackage{breqn}
\usepackage{mathtools}
\usepackage{bbm}
\usepackage{bm}
\usepackage{amsthm}
\usepackage{balance}

\DeclareSymbolFont{cmsymbols}{OMS}{cmsy}{m}{n}
\let\emptyset\relax
\DeclareMathSymbol{\emptyset}{\mathord}{cmsymbols}{"3B}

\def\BibTeX{{\rm B\kern-.05em{\sc i\kern-.025em b}\kern-.08em
    T\kern-.1667em\lower.7ex\hbox{E}\kern-.125emX}}

\newtheorem{lemma}{Lemma}
\newtheorem{theorem}{Theorem}

\newtheorem{remark}{Remark}

\begin{document}

\pagestyle{plain}

\title{
Secure and Private Structured-Subset Retrieval: Fundamental Limits and Achievable Schemes
}

\author{
\IEEEauthorblockN{Maha Issa and Anoosheh Heidarzadeh}
\IEEEauthorblockA{
Department of Electrical and Computer Engineering\\
Santa Clara University, Santa Clara, CA, USA\\
\{missa,aheidarzadeh\}@scu.edu
}
}

\maketitle

\begin{abstract}
This work introduces the \emph{Secure and Private Structured-Subset Retrieval (SPSSR)} problem. In SPSSR, a user wishes to retrieve one subset from an arbitrary family of size-$D$ subsets from $K$ messages replicated across $N$ non-colluding servers that share randomness unknown to the user. The privacy requirement ensures that no server learns which subset is requested, while the security requirement ensures that the user learns nothing about the messages outside the requested subset. This generalizes Symmetric Multi-message Private Information Retrieval (SMPIR), where the candidate demand sets consist of all size-$D$ subsets. We show that, for every candidate demand family, the maximum achievable retrieval rate is equal to ${1-1/N}$. We also show that the minimum ratio between the size of the shared randomness and the message size required to achieve this rate is ${D/(N-1)}$, and that, for balanced linear SPSSR schemes, the minimum required subpacketization level is ${(N-1)/\gcd(D,N-1)}$; both quantities are independent of the demand family. Our converse proof for the maximum achievable retrieval rate applies to arbitrary demand families, unlike the existing proof for SMPIR, which is tailored to the full demand family. For achievability, we construct a single SPSSR scheme that applies uniformly to every demand family, achieves the optimal retrieval rate with the optimal shared-randomness ratio, and requires the optimal subpacketization level among balanced linear schemes. This subpacketization level is no larger than that of known SMPIR schemes in any parameter regime and is smaller in some regimes.
\end{abstract}

\thispagestyle{plain}

\section{Introduction}

In the Private Information Retrieval (PIR) problem, a user wishes to retrieve one message from a database stored on one or multiple servers without revealing the identity of the desired message to any server~\cite{SJ2017,SJ2016ArbitraryTIFS,SJ2018Multiround,TSC2019,VBU2022}.
The retrieval efficiency of a PIR scheme is measured by its retrieval rate, defined as the ratio of the amount of desired data to the total amount of retrieved data. 
The maximum achievable retrieval rate for arbitrary numbers of servers $N$ and messages $K$ was characterized in~\cite{SJ2017}.

Several variations of the PIR problem have been studied in the literature; see, e.g.,~\cite{VWU2023,UAGJTT2022} and the references therein.
One important variation is Symmetric PIR (SPIR)~\cite{SJ2019Sym,WU2021}, which strengthens the classical PIR formulation by imposing a security requirement in addition to privacy.
Specifically, the user must not learn any information about messages other than the desired one.
It was shown in~\cite{SJ2019Sym} that the maximum achievable retrieval rate is ${1-1/N}$, provided that the servers have access to sufficient shared randomness that is unknown to the user.

Another important extension is the Multi-message PIR (MPIR) problem~\cite{BU2018,WHS2022,HWS2025,WHS2025,HZTS2025}, where the user wishes to retrieve $D\geq 2$ messages simultaneously. 
Several MPIR schemes have been developed, including those in~\cite{BU2018,HWS2025}, achieving optimal or near-optimal retrieval rates in different parameter regimes.
The secure counterpart of MPIR, referred to as Symmetric MPIR (SMPIR), was studied in~\cite{WBU2022}. 
In SMPIR, the servers require that the user learn no information about messages outside the demand set, while the privacy requirement ensures that no server can identify which demand subset is requested. 
Similarly to SPIR, it was shown in~\cite{WBU2022} that the maximum achievable retrieval rate is equal to ${1-1/N}$, provided that the servers share sufficient amount of randomness unknown to the user.

The MPIR and SMPIR formulations assume that every subset of a given size can potentially be requested by the user. 
However, in many practical applications, the servers may already know that only certain subsets are feasible demands. 
For example, in a genomic database, a researcher may be interested only in retrieving groups of genes associated with specific biological pathways, rather than arbitrary collections of genes. 
Similarly, in a recommendation system, a user may be interested only in predefined item collections, such as genre-based or theme-based collections, rather than arbitrary subsets of the entire catalog.
In such scenarios, many subsets of messages can never appear as valid demands. 
More generally, the set of candidate demand subsets may be restricted by the structure imposed by the underlying application.

Motivated by such scenarios, our parallel work~\cite{IH2026PSSRarXiv} introduces the Private Structured-Subset Retrieval (PSSR) problem, which generalizes MPIR by replacing the full family of all $D$-subsets of messages with an arbitrary family of candidate demand subsets, each of size $D$.
Although any MPIR scheme remains valid for the PSSR setting, it is shown in~\cite{IH2026PSSRarXiv} that exploiting the demand structure can improve retrieval efficiency. 
In particular, it can lead to higher achievable retrieval rates and a smaller subpacketization level, compared to schemes designed for the full demand family.

These observations raise the following question: can demand structure be exploited to improve retrieval efficiency when security is required in addition to privacy?
In particular, can it lead to higher retrieval rates than SMPIR schemes, or reduce the subpacketization level or the ratio between the size of the shared randomness and the message size?

This work introduces the \emph{Secure and Private Structured-Subset Retrieval (SPSSR)} problem, which extends SMPIR to settings with structured demand families. 
In SPSSR, the user wishes to retrieve one demand subset from a known family of candidate subsets, each of size $D$, while preventing the servers from identifying which subset is requested. 
At the same time, the security requirement ensures that the user obtains no information about messages outside the demand subset. 

We restrict attention to \emph{balanced ${\{0,1\}}$-linear SPSSR schemes}, in which all messages are divided into the same number of equal-size subpackets, the user retrieves the same number of linear combinations from each server, and each combination is formed using coefficients in ${\{0,1\}}$ from message subpackets and shared-randomness symbols of the same size as a message subpacket.
This class is attractive because it evenly distributes communication and computation, works over arbitrary finite fields, admits low-complexity encoding and decoding operations, and includes the best-known achievable schemes for SPIR~\cite{SJ2019Sym} and SMPIR~\cite{WBU2022}.

For this class of schemes, we characterize the maximum achievable retrieval rate as ${1-1/N}$. 
We also show that the minimum shared-randomness ratio and subpacketization level required to achieve this rate are ${D/(N-1)}$ and ${(N-1)/\gcd(D,N-1)}$, respectively.
These results are independent of the demand family and coincide with their SMPIR counterparts for the full demand family. 
Moreover, the retrieval-rate and shared-randomness-ratio results hold for arbitrary SPSSR schemes, whereas the subpacketization-level result holds for balanced linear SPSSR schemes.

Our converse proof for the maximum achievable retrieval rate applies to
arbitrary demand families, unlike the proof of~\cite{WBU2022} for SMPIR, which is tailored to the full demand family.
For achievability, we construct a single SPSSR scheme that applies uniformly to every demand family, achieves the optimal retrieval rate with the optimal shared-randomness ratio, and requires the optimal subpacketization level among all balanced linear schemes.
The scheme generalizes the SPIR scheme of~\cite{SJ2019Sym} from retrieving one message to retrieving an entire demand set simultaneously. 
This avoids running a separate SPIR scheme for each message in the demand set, which is the best previously known rate-optimal SMPIR approach from the perspective of subpacketization, and can in turn reduce the required subpacketization level.

\section{Problem Setup}

For any integers ${i,j}$ such that ${0\leq i\leq j}$, we denote the set ${\{i,i+1,\dots,j\}}$ by ${[i:j]}$.
We denote random variables by bold-face symbols and their realizations by regular symbols.
We fix an arbitrary prime power $q$ throughout, denote the finite field of order $q$ by $\mathbb{F}_q$, and denote the $L$-dimensional vector space over $\mathbb{F}_q$ by $\mathbb{F}_q^L$ for any integer $L\geq 1$.
All entropy and mutual information quantities are measured in $q$-ary units.

Consider a user that interacts with ${N\geq 2}$ non-colluding servers.
Each server stores the same set of $K$ messages ${\mathrm{X}_1,\dots,\mathrm{X}_K}$ and has access to $M$ secret keys $\mathrm{S}_1,\dots,\mathrm{S}_M$ that are shared
among the servers and unknown to the user.
Each message $\mathrm{X}_i$, ${i \in [1:K]}$, consists of $L$ symbols from ${\mathbb{F}_q}$, i.e., ${\mathrm{X}_i\in \mathbb{F}_{q}^{L}}$, and each shared secret key $\mathrm{S}_m$, ${m\in [1:M]}$, is a symbol from $\mathbb{F}_q$, i.e., ${\mathrm{S}_m\in \mathbb{F}_q}$. 
For any ${\mathrm{U}\subseteq[1:K]}$, define ${\overline{\mathrm{U}}\coloneqq[1:K]\setminus \mathrm{U}}$, ${\mathrm{X}_{\mathrm{U}}\coloneqq\{\mathrm{X}_i: i\in \mathrm{U}\}}$, and ${\mathrm{X}_{\overline{\mathrm{U}}}\coloneqq \{\mathrm{X}_i: i\in \overline{\mathrm{U}}\}}$.
 
The user wishes to retrieve $D$ messages, for some ${2\leq D\leq K-1}$, indexed by $\mathrm{W}\in \{\mathrm{W}_1,\dots,\mathrm{W}_E\}$, where each $\mathrm{W}_j$ is a subset of $[1:K]$ of size $D$. 

We refer to $\mathrm{X}_{\mathrm{W}}$ as the \emph{demand messages},  ${\mathrm{X}_{\overline{\mathrm{W}}}}$ as the \emph{interference messages}, $\mathrm{W}$ as the \emph{demand index set}, and ${\mathrm{W}_1,\dots,\mathrm{W}_E}$ as the \emph{candidate demand index sets}.  

In this work, we assume the following:
\begin{itemize}
    \item The random variables ${\mathbf{X}_1,\dots,\mathbf{X}_K}$ are independent and uniformly distributed over ${\mathbb{F}_{q}^{L}}$, which implies that ${H(\mathbf{X}_1,\dots,\mathbf{X}_K)=KL}$, and more generally, ${H(\mathbf{X}_{\mathrm{U}})=|\mathrm{U}|L}$ for any ${\mathrm{U}\subseteq[1:K]}$. 
    \item The random variables ${\mathbf{S}_1,\dots,\mathbf{S}_M}$ are independent and uniformly distributed over ${\mathbb{F}_{q}}$, which implies that ${H(\mathbf{S}_1,\dots,\mathbf{S}_M)=M}$.
    \item The random variable $\mathbf{W}$ is distributed arbitrarily over $\{\mathrm{W}_1,\dots,\mathrm{W}_E\}$, subject to the condition that every $\mathrm{W}_j$, $j \in [1:E]$, has a nonzero probability.
    \item $\mathbf{X}_{[1:K]}$, $\mathbf{S}\coloneqq (\mathbf{S}_1,\dots,\mathbf{S}_M)$, and $\mathbf{W}$ are independent random variables.
\end{itemize}

The user generates $N$ queries ${\mathrm{Q}^{[\mathrm{W}]}_{1},\dots,\mathrm{Q}^{[\mathrm{W}]}_{N}}$, and sends query ${\mathrm{Q}^{[\mathrm{W}]}
_{n}}$ to server $n$ for each $n\in [1:N]$. 
Each query is a (possibly stochastic) function of the demand index set, generated without a prior access to the messages or the shared secret keys, i.e., 
\begin{equation}
\label{eq:Q_indep_X_S}
I(\mathbf{Q}_{[1:N]}^{[\mathrm{W}]};\mathbf{X}_{[1:K]},\mathbf{S})=0,
\end{equation} 
where $\mathbf{Q}_{[1:N]}^{[\mathrm{W}]}\coloneqq \{\mathbf{Q}^{[\mathrm{W}]}_{1},\dots,\mathbf{Q}^{[\mathrm{W}]}_{N}\}$. 

Upon receiving the query ${\mathrm{Q}_{n}^{[\mathrm{W}]}}$, each server $n$ computes an answer ${\mathrm{A}_{n}^{[\mathrm{W}]}}$ and sends it back to the user. 
Answers are deterministic functions of the queries, messages, and shared secret keys, i.e., 
\begin{equation} 
\label{eq:A_func_Q_X_S}
H(\mathbf{A}_{n}^{[\mathrm{W}]} | \mathbf{Q}_{n}^{[\mathrm{W}]}, \mathbf{X}_{[1:K]},\mathbf{S})=0, \quad \forall n \in [1:N].
\end{equation}

Upon receiving all the servers' answers, the user must be able to recover the demand messages, i.e.,
\begin{equation} 
\label{eq:correctness}
H( \mathbf{X}_{\mathrm{W}} | \mathbf{Q}_{[1:N]}^{[\mathrm{W}]}, \mathbf{A}_{[1:N]}^{[\mathrm{W}]})=0,
\end{equation} where $\mathbf{A}_{[1:N]}^{[\mathrm{W}]}\coloneqq \{\mathbf{A}^{[\mathrm{W}]}_{1},\dots,\mathbf{A}^{[\mathrm{W}]}_{N}\}$. 
We refer to this requirement as the \emph{correctness condition}. 

The information available to any server must reveal no information regarding the realization $\mathrm{W}$, i.e.,
\begin{equation}
\label{eq:privacy}	I(\mathbf{W};\mathbf{Q}_{n}^{[\mathrm{W}]},\mathbf{A}_{n}^{[\mathrm{W}]},\mathbf{X}_{[1:K]},\mathbf{S})=0, \quad \forall n \in [1:N].
\end{equation}
This requirement, which we refer to as the \emph{privacy condition}, keeps the user's demand index set private from any server.

Moreover, the servers wish to prevent the user from obtaining any information about the interference messages, i.e.,  
\begin{equation}
\label{eq:security}	
I(\mathbf{X}_{\overline{\mathrm{W}}};\mathbf{Q}_{[1:N]}^{[\mathrm{W}]},\mathbf{A}_{[1:N]}^{[\mathrm{W}]})=0.
\end{equation}
We refer to this requirement as the \emph{security condition}.

The problem is to design a scheme that satisfies the correctness, privacy, and security conditions. 
We refer to this problem as \emph{Secure and Private Structured-Subset Retrieval (SPSSR)}. 

In this work, we assume, without loss of generality, that $\big|\bigcup_{j=1}^{E}\mathrm{W}_j\big|=K$ and
$\big|\bigcap_{j=1}^{E}\mathrm{W}_j\big|=0$.
Indeed, if an index ${i\in[1:K]}$ appears in none of the
$\mathrm{W}_j$'s, then the message $\mathrm{X}_i$ can be removed, yielding an equivalent instance with $K-1$ messages. 
If instead $i$ appears in all of the $\mathrm{W}_j$'s, then $\mathrm{X}_i$ can be retrieved directly---while still meeting the above requirements---and removing $i$ from each $\mathrm{W}_j$ again yields an instance with $K-1$ messages.

When ${D\mid K}$ and ${E = K/D}$, the $E$ candidate demand subsets, each of size ${D=K/E}$, form a partition of the $K$ messages.  
In this case, each candidate demand subset is perceived as a super-message consisting of $K/E$ messages.
Thus, the SPSSR problem reduces to the SPIR problem~\cite{SJ2019Sym}, where the user privately and securely retrieves one of the $E$ super-messages stored on the $N$ servers.
When ${E=\binom{K}{D}}$, the SPSSR problem is equivalent to the SMPIR problem~\cite{WBU2022}, in which every $D$-subset of the messages is a candidate demand. 

In this work, we focus on a class of SPSSR schemes that we refer to as \emph{balanced $\{0,1\}$-linear SPSSR schemes}.
In such schemes, each message is partitioned into $L$ subpackets, each consisting of a single message symbol. 
The user queries each server for a collection of linear combinations of message subpackets and shared secret keys with coefficients in $\{0,1\}$, and
each server answers the user with the corresponding linear combinations.
Moreover, the queries sent to all servers have the same total length, and the answers returned by all servers have the same total length.

We define the \emph{retrieval rate} of a balanced ${\{0,1\}}$-linear SPSSR scheme as the ratio between the amount of information required by the user and the total amount of information retrieved from all servers, namely
\begin{equation}\label{eq:RateDef}
\frac{H(\mathbf{X}_{\mathbf{W}})}{\sum_{n=1}^N H (\mathbf{A}_{n}^{[\mathbf{W}]} | \mathbf{Q}_{n}^{[\mathbf{W}]})}.
\end{equation}

We refer to the number of subpackets per message, $L$, as the \emph{subpacketization level}, and to the number of shared secret keys normalized by the subpacketization level, ${M/L}$, as the \emph{shared-randomness ratio}.

The goals of this work are threefold:
\begin{itemize}
\item[(i)] to characterize the maximum retrieval rate achievable by balanced
${\{0,1\}}$-linear SPSSR schemes, over all subpacketization levels and
shared-randomness ratios, in terms of the number of servers $N$ and the
candidate demand index sets $\mathrm{W}_1,\dots,\mathrm{W}_E$ (and, in turn, the total number of messages $K$ and the number of demand messages $D$); 
\item[(ii)] to characterize the
minimum shared-randomness ratio and the minimum subpacketization level required to achieve this maximum rate; and
\item[(iii)] to determine whether these two minima can be achieved simultaneously.
\end{itemize}

\section{Main results}

In this section, we present our converse and achievability results for the SPSSR problem.

\begin{theorem}
\label{thm:SPSSR_capacity}
For $N$ servers and $E$ candidate demand index sets $\mathrm{W}_1,\dots,\mathrm{W}_E$, the maximum rate achievable by any balanced ${\{0,1\}}$-linear SPSSR scheme is
\begin{equation}
\label{eq:SPSSR_cap}
{R} \coloneqq 1-\frac{1}{N},
\end{equation}
independently of the particular choice of
$\mathrm{W}_1,\dots,\mathrm{W}_E$.
\end{theorem}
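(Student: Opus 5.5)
The proof has two parts: a converse showing that no SPSSR scheme, linear or not, exceeds rate $1-1/N$, and an achievable balanced $\{0,1\}$-linear scheme that attains it.

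\textbf{Converse.} We show $\sum_{n} H(\mathbf{A}_n^{[\mathrm{W}]}\mid \mathbf{Q}_n^{[\mathrm{W}]}) \ge \frac{N}{N-1} DL$ for every demand family. The idea is to reduce to the SPIR-type argument of~\cite{SJ2019Sym}, using a single pair of candidate sets rather than the full family.

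By the standing assumption $\big|\bigcap_j \mathrm{W}_j\big|=0$, every index $i\in\mathrm{W}$ lies outside some other candidate set $\mathrm{W}'$. Fix $\mathrm{W}$ and such a $\mathrm{W}'$, and write $\mathrm{U}=\mathrm{W}\setminus\mathrm{W}'$, which is nonempty.

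First, we establish the key SPIR-type inequality for each server $n$:
\[
H(\mathbf{A}_n^{[\mathrm{W}]}\mid \mathbf{Q}_n^{[\mathrm{W}]}) \;\ge\; H(\mathbf{X}_{\mathrm{U}}\mid \mathbf{A}_{[1:N]\setminus\{n\}}^{[\mathrm{W}]},\mathbf{Q}_{[1:N]}^{[\mathrm{W}]}),
\]
or more directly, that the user cannot decode from the other $N-1$ answers alone. The argument runs as follows:
\begin{itemize}
\item By privacy~\eqref{eq:privacy}, $(\mathbf{Q}_n,\mathbf{A}_n,\mathbf{X},\mathbf{S})$ has the same distribution under $\mathrm{W}$ and under $\mathrm{W}'$.
\item Under $\mathrm{W}'$, security~\eqref{eq:security} gives $I(\mathbf{X}_{\mathrm{U}};\mathbf{A}_{[1:N]}^{[\mathrm{W}']},\mathbf{Q}^{[\mathrm{W}']})=0$. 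In particular, the single answer $\mathbf{A}_n^{[\mathrm{W}']}$ conveys nothing about $\mathbf{X}_{\mathrm{U}}$, and the same holds when conditioning on $\mathbf{Q}_n$.
\item Transferring this to $\mathrm{W}$ via privacy yields $I(\mathbf{X}_{\mathrm{U}};\mathbf{A}_n^{[\mathrm{W}]}\mid\mathbf{Q}_n^{[\mathrm{W}]})=0$.
\end{itemize}

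Second, we combine this with correctness~\eqref{eq:correctness}. Decoding $\mathbf{X}_{\mathrm{W}}$ from all answers, together with the chain rule over the remaining answers, shows that each answer must carry about $1/(N-1)$ of the new information. The standard manipulation of~\cite{SJ2019Sym} then gives $H(\mathbf{A}_n\mid \mathbf{Q}_n) \ge \frac{1}{N-1}\sum_{m\ne n} \cdots$.

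The step I expect to be the main obstacle is that this argument naturally bounds the information about $\mathbf{X}_{\mathrm{U}}$, not about all of $\mathbf{X}_{\mathrm{W}}$. To handle it, I would cover $\mathrm{W}$ by the sets $\mathrm{W}\setminus\mathrm{W}'$ over the various choices of $\mathrm{W}'$ and exploit linearity of the scheme. Concretely, I would argue symbol by symbol that every retrieved subpacket of a demand message must be masked by keys in each individual answer, so $H(\mathbf{A}_n\mid\mathbf{Q}_n)\ge H(\mathbf{A}_{n}\mid \mathbf{Q}_n, \mathbf{X}_{[1:K]})$. This is where the balanced $\{0,1\}$-linear structure is used. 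The next ingredient is the inequality $\sum_n H(\mathbf{A}_n\mid\mathbf{Q}_n,\mathbf{X}) \ge H(\mathbf{A}_{[1:N]}\mid \mathbf{Q},\mathbf{X})$, while $H(\mathbf{A}_{[1:N]}\mid\mathbf{Q})\ge DL + H(\mathbf{A}_{[1:N]}\mid \mathbf{Q},\mathbf{X})$ by correctness. Combining these with the per-server bound $H(\mathbf{A}_n\mid\mathbf{Q}_n)\le H(\mathbf{A}_{[1:N]}\mid\mathbf{Q}) - (\text{info in the others})$ gives $(N-1)\sum_n H(\mathbf{A}_n\mid\mathbf{Q}_n)\ge N\,DL$.

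\textbf{Achievability.} We generalize the SPIR scheme of~\cite{SJ2019Sym}. Take $L=N-1$ and $M=D$ keys, one per position in the demand set, with each key of subpacket size.
\begin{itemize}
\item The user draws a uniformly random vector $h\in\{0,1\}^{K}$, applied independently per subpacket index.
\item Each of servers $1,\dots,N-1$ returns $\langle h,\text{subpacket column}\rangle+\sum_m S_m$, with server $n\le N-1$ flipping the coordinates of $\mathrm{W}$ in the appropriate subpacket rows.
\item Server $N$ returns the combination using $h$ itself plus the same key sum.
\end{itemize}
Each query is marginally uniform, which gives privacy. Taking differences of answers yields exactly the $D(N-1)$ demand symbols, and the interference contributions cancel, masked by the uniform key sum, which gives security. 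A more careful arrangement, with one answer per server per subpacket and keys padded so that every answer is a fresh one-time-padded symbol, realizes rate $\frac{D(N-1)}{DN}=1-1/N$. Finally, the result is independent of the family because neither bound depended on the choice of $\mathrm{W}_1,\dots,\mathrm{W}_E$.
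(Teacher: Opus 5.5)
\textbf{There is a gap in the converse.} You correctly identify its crux: each single answer must be independent of the \emph{whole} demand set, $H(\mathbf{A}_n^{[\mathrm{W}]}\mid\mathbf{Q}_n^{[\mathrm{W}]},\mathbf{X}_{\mathrm{W}})=H(\mathbf{A}_n^{[\mathrm{W}]}\mid\mathbf{Q}_n^{[\mathrm{W}]})$. But your proposed way of obtaining it does not work.

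Privacy together with unconditional security under a single $\mathrm{W}'$ yields only $I(\mathbf{X}_{\mathrm{W}\setminus\mathrm{W}'};\mathbf{A}_n^{[\mathrm{W}]}\mid\mathbf{Q}_n^{[\mathrm{W}]})=0$. Such marginal statements over a cover of $\mathrm{W}$ do not imply joint independence, even for $\{0,1\}$-linear answers. For example, an unmasked symbol $\mathrm{X}_{1,1}+\mathrm{X}_{2,1}$ is independent of $\mathbf{X}_1$ and of $\mathbf{X}_2$ separately, but not of $(\mathbf{X}_1,\mathbf{X}_2)$. Also, the inequality $H(\mathbf{A}_n\mid\mathbf{Q}_n)\ge H(\mathbf{A}_n\mid\mathbf{Q}_n,\mathbf{X}_{[1:K]})$ that you attribute to linearity is trivial and points the wrong way; what you need is equality.

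The missing idea is a \emph{conditional} form of security. Since $\mathbf{X}_{\mathrm{W}'}$ is decodable from all answers under $\mathrm{W}'$, security and correctness together give $I(\mathbf{X}_{\overline{\mathrm{W}'}};\mathbf{A}_n^{[\mathrm{W}']},\mathbf{Q}_n^{[\mathrm{W}']},\mathbf{X}_{\mathrm{W}'})=0$. After transferring this to $\mathrm{W}$ via privacy, you may therefore condition on the messages in $\mathrm{W}\cap\mathrm{W}'$. The paper uses this to prove $H(\mathbf{X}_{\mathrm{V}}\mid\mathbf{Q}_n^{[\mathrm{W}]},\mathbf{A}_n^{[\mathrm{W}]})=H(\mathbf{X}_{\mathrm{V}})$ for every $\mathrm{V}\subseteq\mathrm{W}$, by induction on $|\mathrm{V}|$:
\begin{itemize}
\item pick $l\in\mathrm{V}$ and a candidate $\mathrm{W}'$ with $l\notin\mathrm{W}'$, which exists because $\bigcap_j\mathrm{W}_j=\emptyset$;
\item handle $\mathrm{V}\cap\mathrm{W}'$ by the induction hypothesis;
\item handle $\mathrm{V}\setminus\mathrm{W}'$, conditioned on $\mathbf{X}_{\mathrm{V}\cap\mathrm{W}'}$, by privacy and conditional security under $\mathrm{W}'$.
\end{itemize}
No linearity is used, so the bound holds for all SPSSR schemes.

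\textbf{The closing chain uses the wrong inequality.} Given that lemma (the paper's Lemma~3), the final step should be $H(\mathbf{A}_{[1:N]}\mid\mathbf{Q}_{[1:N]},\mathbf{X}_{\mathrm{W}})\ge\frac{1}{N}\sum_{n}H(\mathbf{A}_n\mid\mathbf{Q}_{[1:N]},\mathbf{X}_{\mathrm{W}})$, since a joint entropy dominates each marginal. Then reduce $\mathbf{Q}_{[1:N]}$ to $\mathbf{Q}_n$ as in the paper's Lemma~2 and drop $\mathbf{X}_{\mathrm{W}}$ by Lemma~3. This gives $\sum_n H(\mathbf{A}_n\mid\mathbf{Q}_n)\ge DL+\frac{1}{N}\sum_n H(\mathbf{A}_n\mid\mathbf{Q}_n)$. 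The subadditivity inequality $\sum_n H(\mathbf{A}_n\mid\mathbf{Q}_n,\mathbf{X})\ge H(\mathbf{A}_{[1:N]}\mid\mathbf{Q},\mathbf{X})$ that you invoke bounds the joint term from the wrong side.

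\textbf{The achievability is not a working scheme as first described.} With one common mask $\sum_m\mathrm{S}_m$ and the coordinates of $\mathrm{W}$ flipped together, answer differences are sums of demand subpackets. So it cannot deliver $D(N-1)$ individual demand symbols. Your ``more careful arrangement'' must be made explicit. One valid version is $D$ parallel SPIR instances with independent query vectors and independent keys $\mathrm{S}_1,\dots,\mathrm{S}_D$, which does attain $1-1/N$ with $L=N-1$. The paper instead uses $L=(N-1)/\gcd(D,N-1)$ and $M=D/\gcd(D,N-1)$ answer rows per server, each masked by its own key, with each server flipping exactly one demand entry per row. This attains the same rate and also the optimal subpacketization.
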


The converse proof is presented in Section~\ref{sec:rate_conv_proof_SPSSR}.
The result holds for all SPSSR schemes and therefore applies, in particular, to the balanced ${\{0,1\}}$-linear SPSSR schemes considered here.
The proof differs from that in~\cite{WBU2022} for the SMPIR setting.
Their proof relies on two key technical lemmas, \cite[Lemmas~3 and~4]{WBU2022}, which are tailored to the full-demand-family setting and do not extend to arbitrary demand families. 
In contrast, we introduce a new lemma that applies to any demand family. 
As a byproduct, our proof also yields an alternative converse proof for the SMPIR setting.

To prove achievability, we construct a balanced ${\{0,1\}}$-linear SPSSR scheme whose rate matches the converse bound. 
The scheme, presented in
Section~\ref{sec:achiev_proof_SPSSR}, extends the SPIR scheme of~\cite{SJ2019Sym} to the multi-demand setting. 
The main idea is to use shared random linear combinations and a common masking structure for the entire demand set, rather than running a separate SPIR scheme for each demand message.

\begin{theorem}
\label{thm:SPSSR_L*_bounds}
For $N$ servers and $E$ candidate demand index sets
$\mathrm{W}_1,\dots,\mathrm{W}_E$, the minimum shared-randomness ratio and the minimum subpacketization level required by any balanced ${\{0,1\}}$-linear SPSSR scheme achieving rate $R$ are respectively
\begin{equation}
    \label{eq:kappa_SPSSR}
    \frac{M}{L} = \frac{D}{N-1}
\end{equation}
and 
\begin{equation}
    \label{eq:L*_SPSSR}
    L = \frac{N-1}{\gcd(D,N-1)},
\end{equation}
both independently of the particular choice of
$\mathrm{W}_1,\dots,\mathrm{W}_E$, and these two minima can be achieved simultaneously.
\end{theorem}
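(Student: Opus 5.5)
The proof has three parts: a converse on $M/L$, a converse on $L$, and one scheme meeting both bounds. Throughout, Theorem~\ref{thm:SPSSR_capacity} fixes the rate at $R=1-1/N$. Since the scheme is balanced, this means each server returns exactly $DL/(N-1)$ answer symbols (conditioned on its query). These are also the symbol counts used in the entropy arguments.

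\emph{Shared-randomness converse.} I would follow the SPIR argument of~\cite{SJ2019Sym}, using only correctness, security and privacy for one fixed demand set $\mathrm{W}_j$, so that the family plays no role. First, privacy together with the fact that $\mathrm{W}_j\cup\mathrm{W}_{j'}$ ranges over the demand family (every index lies outside some $\mathrm{W}_{j'}$, by the no-common-index assumption) gives two facts. Each answer $\mathbf{A}_n$ must be independent of $\mathbf{X}_{[1:K]}$ given the query. Equivalently, each single answer looks identical whether or not it "carries" desired information. Second, by correctness and security, $\mathbf{X}_{\mathrm{W}}$ is a function of all answers, and the answers reveal nothing about $\mathbf{X}_{\overline{\mathrm{W}}}$.

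I would then chain
\[
DL \le I(\mathbf{X}_{\mathrm{W}};\mathbf{A}_{[1:N]}\mid \mathbf{Q}) \le H(\mathbf{A}_{[1:N]}\mid \mathbf{Q}) - H(\mathbf{A}_{[1:N]}\mid \mathbf{Q},\mathbf{X}_{[1:K]}).
\]
The subtracted term is bounded by $H(\mathbf{S})=M$ from above, but I need the reverse direction. The standard route is to show $H(\mathbf{A}_{n}\mid \mathbf{Q},\mathbf{X}_{[1:K]})=H(\mathbf{A}_n\mid\mathbf{Q})$, which holds because a single answer is independent of the messages by privacy plus security. Hence
\[
M \ge H(\mathbf{A}_{[1:N]}\mid\mathbf{Q},\mathbf{X}_{[1:K]}) \ge H(\mathbf{A}_n\mid \mathbf{Q}) \ge \frac{DL}{N-1},
\]
where the last inequality uses the rate bound itself (the rate is at most $1-1/N$, so every answer has length at least $DL/(N-1)$). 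This yields $M/L\ge D/(N-1)$.

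\emph{Subpacketization converse.} For a balanced scheme, each server returns an integer number of symbols, namely $DL/(N-1)$. Hence $(N-1)\mid DL$, which is equivalent to $\tfrac{N-1}{\gcd(D,N-1)}\mid L$. This immediately gives $L\ge (N-1)/\gcd(D,N-1)$.

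\emph{Achievability.} I would set $L=(N-1)/g$ with $g=\gcd(D,N-1)$, and $M=DL/(N-1)=D/g$, and generalize the SPIR scheme of~\cite{SJ2019Sym}. Concretely, view the $DL$ demand subpackets as a pool. Server $n$ returns $T\coloneqq D/g$ symbols, each of the form
\[
\mathrm{S}_t+\sum_{i\in \mathrm{U}_{n,t}}(\text{subpackets}).
\]
For a designated server (say server $1$), the sets $\mathrm{U}_{1,t}$ are random sets drawn uniformly from some family, independent of $\mathrm{W}$. For every other server, $\mathrm{U}_{n,t}$ is $\mathrm{U}_{1,t}$ plus exactly one fresh demand subpacket. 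The subpackets across the $N-1$ other servers and $T$ symbols cover all $(N-1)T=DL$ demand subpackets exactly once; this is possible by dividing them evenly. Subtracting server $1$'s answer from server $n$'s answer then decodes each demand subpacket. Security holds because the user sees only the differences (the demand subpackets) plus $T$ values masked by the independent uniform keys $\mathrm{S}_t$. Privacy holds by choosing the random sets $\mathrm{U}_{1,t}$, for each column of subpackets, as uniformly random $\{0,1\}$-patterns and applying a uniformly random permutation of subpacket labels within each message. With this choice, each server's query marginally has the same distribution (a uniform random $\{0,1\}$-pattern on each symbol) for every $\mathrm{W}$, as in~\cite{SJ2019Sym}. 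Finally, the rate is $DL/(N T)=1-1/N$.

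\emph{Main obstacle.} The hardest step is the privacy argument in the achievability part when $D>1$ subpackets per message share one key. A server must not be able to distinguish a symbol carrying one extra demand subpacket from a plain random combination. I would handle this by making the reference combination $\mathrm{U}_{1,t}$ over all $K$ messages uniformly random (an XOR-style random subset of subpacket indices over all messages). Flipping a single demand index then leaves the pattern uniform, so each server individually sees a uniform pattern regardless of $\mathrm{W}$.
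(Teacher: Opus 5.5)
Your proposal is correct and follows essentially the paper's proof. The $M/L$ converse rests on $M\ge H(\mathbf{A}_n\mid\mathbf{Q}_n,\mathbf{X}_{[1:K]})=H(\mathbf{A}_n\mid\mathbf{Q}_n)$, which is the paper's Lemma~3 (built from privacy, correctness, security, and the fact that every index lies outside some candidate set); the $L$ converse is the same divisibility condition $(N-1)\mid DL$; and your scheme is the paper's, with a uniform $\{0,1\}$ query matrix for server~1 and one demand coefficient flipped per row for every other server. Two small fixes: the last step of your $M/L$ chain should average $M\ge H(\mathbf{A}_n\mid\mathbf{Q}_n)$ over $n$ and use $\sum_{n}H(\mathbf{A}_n\mid\mathbf{Q}_n)\ge NDL/(N-1)$, rather than appeal to answer length, and ``$\mathrm{U}_{1,t}$ plus one demand subpacket'' must mean flipping that coefficient, as your final paragraph indicates, since a set union or an $\mathbb{F}_q$-addition would leak information about $\mathrm{W}$.
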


The converse proof for~\eqref{eq:kappa_SPSSR} is presented in Section~\ref{sec:M/L_conv_proof_SPSSR} and follows the approach in~\cite{WBU2022} for the SMPIR setting.
The converse proof for~\eqref{eq:L*_SPSSR} is presented in Section~\ref{sec:L_conv_proof_SPSSR} and follows from the balanced structure of the scheme.
The achievability of both \eqref{eq:kappa_SPSSR} and~\eqref{eq:L*_SPSSR}
is established by the proposed scheme.

\begin{remark}
\normalfont 
Interestingly, in the SPSSR setting, restricting demands to a structured family does not change the optimal rate. Specifically, the optimal rate coincides with that of the SMPIR setting~\cite{WBU2022}, which pertains to the full demand family, as well as the SPIR setting~\cite{SJ2019Sym}. 
This is in contrast to the PSSR setting~\cite{IH2026PSSRarXiv}, where the demand family is restricted in a similar way and the privacy requirement is the same, but there is no security requirement. 
In particular, as shown in~\cite{IH2026PSSRarXiv}, the optimal rate for PSSR can be strictly larger than the optimal rate in the MPIR setting, which pertains to the full demand family.
Therefore, when security is imposed in addition to privacy, exploiting structure in the candidate demands offers no advantage with respect to rate.
Thus, and since any SMPIR scheme also applies when the demand space is restricted to an arbitrary subfamily, any rate-optimal SMPIR scheme is also rate-optimal for the SPSSR setting. 
This includes the SMPIR scheme of~\cite{WBU2022}, as well as the scheme obtained by applying the SPIR scheme of~\cite{SJ2019Sym} to retrieve the $D$ demand messages one at a time.
\end{remark}

\begin{remark}
\normalfont
The minimum shared-randomness ratio and subpacketization level required to achieve the optimal rate are independent of the demand family in the SPSSR setting and coincide with those in the SMPIR setting. 
Thus, exploiting structure in the candidate demands provides no further gain in these two metrics. 
In particular, any SMPIR scheme that is optimal in shared-randomness ratio and subpacketization level remains optimal for SPSSR.
Existing SMPIR schemes achieve the optimal shared-randomness ratio, but they are not necessarily optimal in subpacketization. 
In particular, the scheme of~\cite{WBU2022} requires subpacketization level at least ${N^{K-D+1}/D}$, which grows exponentially with the number of messages $K$; see~\cite{H2026} for details. 
Alternatively, applying the SPIR scheme of~\cite{SJ2019Sym} independently $D$ times also achieves the optimal shared-randomness ratio, but requires subpacketization level ${N-1}$. 
This is independent of $K$, but does not decrease with $D$.
In contrast, our SPSSR scheme achieves the optimal rate with the optimal shared-randomness ratio, while requiring subpacketization level at most ${N-1}$, and strictly less than ${N-1}$ whenever $D$ and ${N-1}$ are not coprime.
\end{remark}

\section{Converse Proofs}

This section presents the converse proofs for
Theorems~\ref{thm:SPSSR_capacity} and~\ref{thm:SPSSR_L*_bounds}: the upper
bound $R$ in~\eqref{eq:SPSSR_cap} on the maximum achievable retrieval rate, and
the lower bounds $M/L$ and $L$ in~\eqref{eq:kappa_SPSSR}
and~\eqref{eq:L*_SPSSR} on the shared-randomness ratio and subpacketization
level required to achieve rate $R$, respectively.

\subsection{Upper Bounding the Achievable Rate}
\label{sec:rate_conv_proof_SPSSR}

The proof relies on the following three lemmas. 
The first two generalize \cite[Lemmas~1 and 2]{WBU2022}, which correspond to the special case $\mathrm{U}=\mathrm{W}$.

\begin{lemma}
    \label{lemma: symmetry}
    For any $\mathrm{W},\mathrm{W}'\in \{\mathrm{W}_1,\dots,\mathrm{W}_E\}$, any ${\mathrm{U}\subseteq[1:K]}$, and any $n\in[1:N]$, it holds that
    \begin{align}
         H(\mathbf{A}_{n}^{[\mathrm{W}]} | \mathbf{X}_{\mathrm{U}}, \mathbf{Q}_{n}^{[\mathrm{W}]}) & = H(\mathbf{A}_{n}^{[\mathrm{W}']} | \mathbf{X}_{\mathrm{U}}, \mathbf{Q}_{n}^{[\mathrm{W}']}).  \label{eq: symmetry_lemma_1}
    \end{align}
\end{lemma}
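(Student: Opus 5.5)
The plan is to derive the equality from the privacy condition~\eqref{eq:privacy}, together with the independence assumptions and the deterministic answer relation~\eqref{eq:A_func_Q_X_S}. Fix $n$ and $\mathrm{U}$. Privacy says that the tuple $(\mathbf{Q}_n^{[\mathbf{W}]},\mathbf{A}_n^{[\mathbf{W}]},\mathbf{X}_{[1:K]},\mathbf{S})$ is independent of $\mathbf{W}$. Since every $\mathrm{W}_j$ has nonzero probability, we can condition on $\mathbf{W}=\mathrm{W}$ and on $\mathbf{W}=\mathrm{W}'$. It follows that the joint distribution of $(\mathbf{Q}_n^{[\mathrm{W}]},\mathbf{A}_n^{[\mathrm{W}]},\mathbf{X}_{[1:K]},\mathbf{S})$ coincides with that of $(\mathbf{Q}_n^{[\mathrm{W}']},\mathbf{A}_n^{[\mathrm{W}']},\mathbf{X}_{[1:K]},\mathbf{S})$. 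Here we interpret $\mathbf{Q}_n^{[\mathrm{W}]}$ as the query generated when the demand realization is $\mathrm{W}$. By the independence assumption, $\mathbf{X}_{[1:K]}$ and $\mathbf{S}$ are independent of $\mathbf{W}$, so conditioning on $\mathbf{W}=\mathrm{W}$ does not alter their law.

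Next, marginalize this joint-distribution equality onto $(\mathbf{A}_n,\mathbf{X}_{\mathrm{U}},\mathbf{Q}_n)$, which is a subcollection of the tuple. This gives
\[
(\mathbf{A}_{n}^{[\mathrm{W}]},\mathbf{X}_{\mathrm{U}},\mathbf{Q}_{n}^{[\mathrm{W}]}) \overset{d}{=} (\mathbf{A}_{n}^{[\mathrm{W}']},\mathbf{X}_{\mathrm{U}},\mathbf{Q}_{n}^{[\mathrm{W}']}).
\]
Conditional entropy is a functional of the joint distribution: $H(\mathbf{A}|\mathbf{X}_{\mathrm{U}},\mathbf{Q}) = H(\mathbf{A},\mathbf{X}_{\mathrm{U}},\mathbf{Q}) - H(\mathbf{X}_{\mathrm{U}},\mathbf{Q})$. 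Applying this to both sides yields~\eqref{eq: symmetry_lemma_1}.

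The only delicate step is making the first step rigorous, namely converting $I(\mathbf{W};\cdot)=0$ into equality of conditional laws for the two specific realizations. Zero mutual information gives $P(\cdot\,|\,\mathbf{W}=\mathrm{W}) = P(\cdot)$ for every $\mathrm{W}$ with positive probability. The assumption that all candidates have nonzero probability is therefore exactly what is needed. I would also remark that $\mathbf{A}_n^{[\mathrm{W}]}$ is a deterministic function of $(\mathbf{Q}_n^{[\mathrm{W}]},\mathbf{X}_{[1:K]},\mathbf{S})$ by~\eqref{eq:A_func_Q_X_S}, with the same function for every demand, since the server does not know $\mathrm{W}$. Consequently, it already suffices to apply privacy in its weaker form $I(\mathbf{W};\mathbf{Q}_n)=0$ together with~\eqref{eq:Q_indep_X_S}, which makes the queries independent of $(\mathbf{X}_{[1:K]},\mathbf{S})$. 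Either route gives the same distributional identity, and I would present the direct one using~\eqref{eq:privacy}.
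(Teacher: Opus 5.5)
Your proposal is correct and follows essentially the same route as the paper, which derives the identity directly from the privacy condition (deferring details to the companion PSSR paper). In both, privacy forces the law of $(\mathbf{Q}_n,\mathbf{A}_n,\mathbf{X}_{[1:K]},\mathbf{S})$ to coincide under the demands $\mathrm{W}$ and $\mathrm{W}'$, and the conditional entropy is then read off from that common law. Your use of the full-support assumption to pass from $I(\mathbf{W};\cdot)=0$ to equality of the two conditional laws is precisely the detail the one-line argument leaves implicit.
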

\begin{proof}
This follows directly from the privacy condition. A complete proof can be found in~\cite[Lemma~1]{IH2026PSSRarXiv}.
\end{proof}

\begin{lemma}
\label{lemma: given_Q1:N_=_given_Qn}
For any $\mathrm{W} \in \{\mathrm{W}_1,\dots,\mathrm{W}_E\}$, any $\mathrm{U} \subseteq [1:K]$, and any $n \in [1:N]$, it holds that
\begin{equation}\label{eq:lemma2}
    H(\mathbf{A}_{n}^{[\mathrm{W}]} | \mathbf{Q}_{[1:N]}^{[\mathrm{W}]}, \mathbf{X}_{\mathrm{U}}) = H(\mathbf{A}_{n}^{[\mathrm{W}]} | \mathbf{Q}_{n}^{[\mathrm{W}]}, \mathbf{X}_{\mathrm{U}}).
\end{equation}
\end{lemma}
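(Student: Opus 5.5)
The plan is to show that, given $\mathbf{Q}_{n}^{[\mathrm{W}]}$ and $\mathbf{X}_{\mathrm{U}}$, the other queries carry no further information about $\mathbf{A}_{n}^{[\mathrm{W}]}$. The argument is a sandwich: one direction is free because conditioning reduces entropy, and the other follows from the independence condition \eqref{eq:Q_indep_X_S} together with the deterministic-answer condition \eqref{eq:A_func_Q_X_S}.

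First, since conditioning reduces entropy,
\begin{equation*}
H(\mathbf{A}_{n}^{[\mathrm{W}]} | \mathbf{Q}_{[1:N]}^{[\mathrm{W}]}, \mathbf{X}_{\mathrm{U}}) \leq H(\mathbf{A}_{n}^{[\mathrm{W}]} | \mathbf{Q}_{n}^{[\mathrm{W}]}, \mathbf{X}_{\mathrm{U}}).
\end{equation*}
For the reverse inequality, it suffices to show that
\begin{equation*}
I(\mathbf{A}_{n}^{[\mathrm{W}]}; \mathbf{Q}_{[1:N]\setminus\{n\}}^{[\mathrm{W}]} | \mathbf{Q}_{n}^{[\mathrm{W}]}, \mathbf{X}_{\mathrm{U}}) = 0.
\end{equation*}
I would bound this mutual information by
\begin{equation*}
I(\mathbf{A}_{n}^{[\mathrm{W}]}, \mathbf{X}_{[1:K]}, \mathbf{S}; \mathbf{Q}_{[1:N]\setminus\{n\}}^{[\mathrm{W}]} | \mathbf{Q}_{n}^{[\mathrm{W}]}, \mathbf{X}_{\mathrm{U}}),
\end{equation*}
using the chain rule and nonnegativity of mutual information. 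I would then split this quantity with the chain rule into two terms:
\begin{equation*}
I(\mathbf{X}_{[1:K]}, \mathbf{S}; \mathbf{Q}_{[1:N]\setminus\{n\}}^{[\mathrm{W}]} | \mathbf{Q}_{n}^{[\mathrm{W}]}, \mathbf{X}_{\mathrm{U}})
+ I(\mathbf{A}_{n}^{[\mathrm{W}]}; \mathbf{Q}_{[1:N]\setminus\{n\}}^{[\mathrm{W}]} | \mathbf{Q}_{n}^{[\mathrm{W}]}, \mathbf{X}_{[1:K]}, \mathbf{S}).
\end{equation*}

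The second term vanishes by \eqref{eq:A_func_Q_X_S}. Given $\mathbf{Q}_{n}^{[\mathrm{W}]}$, $\mathbf{X}_{[1:K]}$ and $\mathbf{S}$, the answer $\mathbf{A}_{n}^{[\mathrm{W}]}$ is deterministic, so its conditional entropy is zero and so is the mutual information.

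The first term is where the care is needed. Here I would use \eqref{eq:Q_indep_X_S}, which says the full query tuple $\mathbf{Q}_{[1:N]}^{[\mathrm{W}]}$ is independent of $(\mathbf{X}_{[1:K]},\mathbf{S})$. Applying the chain rule to $I(\mathbf{Q}_{[1:N]}^{[\mathrm{W}]}; \mathbf{X}_{[1:K]}, \mathbf{S})=0$ gives
\begin{equation*}
0 = I(\mathbf{Q}_{[1:N]}^{[\mathrm{W}]}; \mathbf{X}_{[1:K]}, \mathbf{S}) \geq I(\mathbf{Q}_{[1:N]\setminus\{n\}}^{[\mathrm{W}]}; \mathbf{X}_{[1:K]}, \mathbf{S} | \mathbf{Q}_{n}^{[\mathrm{W}]}).
\end{equation*}
Expanding the right-hand side once more by the chain rule, with $\mathbf{X}_{\mathrm{U}}$ peeled off first, gives
\begin{equation*}
I(\mathbf{Q}_{[1:N]\setminus\{n\}}^{[\mathrm{W}]}; \mathbf{X}_{[1:K]}, \mathbf{S} | \mathbf{Q}_{n}^{[\mathrm{W}]}) = I(\mathbf{Q}_{[1:N]\setminus\{n\}}^{[\mathrm{W}]}; \mathbf{X}_{\mathrm{U}} | \mathbf{Q}_{n}^{[\mathrm{W}]}) + I(\mathbf{Q}_{[1:N]\setminus\{n\}}^{[\mathrm{W}]}; \mathbf{X}_{[1:K]}, \mathbf{S} | \mathbf{Q}_{n}^{[\mathrm{W}]}, \mathbf{X}_{\mathrm{U}}).
\end{equation*}
Both terms are nonnegative and sum to zero, so the second one, which is exactly the first term of our split, is zero.

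Combining the two vanishing terms gives the reverse inequality, and hence \eqref{eq:lemma2}. Privacy is not needed for this argument, and it holds for every realization $\mathrm{W}$.
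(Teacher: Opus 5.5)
Your proof is correct and takes essentially the same route as the paper, which defers to the standard argument of \cite[Lemma~2]{IH2026PSSRarXiv} with $\mathbf{S}$ adjoined wherever $\mathbf{X}_{[1:K]}$ appears. That argument bounds $I(\mathbf{A}_{n}^{[\mathrm{W}]};\mathbf{Q}_{[1:N]}^{[\mathrm{W}]}\,|\,\mathbf{Q}_{n}^{[\mathrm{W}]},\mathbf{X}_{\mathrm{U}})$ by adding $(\mathbf{X}_{[1:K]},\mathbf{S})$ to the first argument, then makes the two resulting terms vanish via \eqref{eq:Q_indep_X_S} and \eqref{eq:A_func_Q_X_S}, exactly as you do, and your remark that privacy is not needed is accurate.
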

\begin{proof}
The proof follows the same line of argument as~\cite[Lemma~2]{IH2026PSSRarXiv}. 
The only modification is that, throughout the steps in~\cite[Eqs.~(25)--(32)]{IH2026PSSRarXiv}, the random variable $\mathbf{S}$ is included as an additional argument wherever $\mathbf{X}_{[1:K]}$ appears.
\end{proof}

We now introduce the third lemma, which follows from the security condition.
This lemma has the same flavor as~\cite[Lemma~5]{WBU2022}, but differs from it in two main aspects: its statement and its proof.

The statement of our lemma generalizes that of~\cite[Lemma~5]{WBU2022}, which corresponds to the special case ${|\mathrm{U}|=D}$ and ${\mathrm{U}\neq\mathrm{W}}$.
Moreover, our proof argument applies to any demand family, whereas the proof in~\cite{WBU2022} relies on two additional lemmas tailored to the full demand family and therefore does not extend to the SPSSR setting.

\begin{lemma}
\label{lemma: dropping_undesired_subset_from_condition}
 For any $\mathrm{W} \in \{\mathrm{W}_1,\dots,\mathrm{W}_E\}$, any $\mathrm{U} \subseteq [1:K]$, and any $n \in [1:N]$, it holds that 
    \begin{equation}
    \label{eq:lemma3}
        H( \mathbf{A}_{n}^{[\mathrm{W}]} | \mathbf{X}_{\mathrm{U}}, \mathbf{Q}_{n}^{[\mathrm{W}]}  ) = H (\mathbf{A}_{n}^{[\mathrm{W}]} |   \mathbf{Q}_{n}^{[\mathrm{W}]} ).
    \end{equation}
\end{lemma}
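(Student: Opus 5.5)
The plan is to reduce the general statement to a statement about a single, well-chosen demand set, transfer it across demand sets via Lemma~\ref{lemma: symmetry}, and then peel the elements of $\mathrm{U}$ off one at a time. Fix $n$ and define, for $\mathrm{V}\subseteq[1:K]$,
\[
f(\mathrm{V})\coloneqq H(\mathbf{A}_{n}^{[\mathrm{W}]} | \mathbf{X}_{\mathrm{V}}, \mathbf{Q}_{n}^{[\mathrm{W}]}).
\]
By Lemma~\ref{lemma: symmetry}, $f(\mathrm{V})$ does not depend on which candidate $\mathrm{W}$ is used. I may therefore evaluate $f(\mathrm{V})$ under any convenient candidate $\mathrm{W}'$. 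The goal~\eqref{eq:lemma3} is exactly $f(\mathrm{U})=f(\emptyset)$.

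\textbf{Key step (security plus correctness).} For any candidate $\mathrm{W}'$, the security condition~\eqref{eq:security} gives
\[
I(\mathbf{X}_{\overline{\mathrm{W}'}};\mathbf{Q}_{[1:N]}^{[\mathrm{W}']},\mathbf{A}_{[1:N]}^{[\mathrm{W}']})=0 .
\]
The correctness condition~\eqref{eq:correctness} says that $\mathbf{X}_{\mathrm{W}'}$ is a function of $(\mathbf{Q}_{[1:N]}^{[\mathrm{W}']},\mathbf{A}_{[1:N]}^{[\mathrm{W}']})$. Adding this function to the second argument does not increase the mutual information, and then dropping variables cannot increase it, so
\[
I(\mathbf{X}_{\overline{\mathrm{W}'}};\mathbf{A}_{n}^{[\mathrm{W}']},\mathbf{Q}_{n}^{[\mathrm{W}']},\mathbf{X}_{\mathrm{T}})=0 \quad\text{for every } \mathrm{T}\subseteq \mathrm{W}'.
\]
By the chain rule this yields
\[
I(\mathbf{X}_{\mathrm{V}\setminus\mathrm{W}'};\mathbf{A}_{n}^{[\mathrm{W}']} | \mathbf{Q}_{n}^{[\mathrm{W}']},\mathbf{X}_{\mathrm{V}\cap\mathrm{W}'})=0,
\]
since $\mathrm{V}\setminus\mathrm{W}'\subseteq\overline{\mathrm{W}'}$ and this conditional mutual information is bounded by the vanishing unconditional one above. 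Evaluating $f$ under $\mathrm{W}'$ and invoking Lemma~\ref{lemma: symmetry}, this gives the reduction rule
\[
f(\mathrm{V})=f(\mathrm{V}\cap\mathrm{W}') \quad\text{for every } \mathrm{V}\subseteq[1:K] \text{ and every candidate } \mathrm{W}'.
\]

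\textbf{Induction.} I then prove $f(\mathrm{V})=f(\emptyset)$ by induction on $|\mathrm{V}|$. If $\mathrm{V}\neq\emptyset$, pick $i\in\mathrm{V}$. By the standing assumption $\bigcap_{j}\mathrm{W}_j=\emptyset$, some candidate $\mathrm{W}'$ satisfies $i\notin\mathrm{W}'$. Then $|\mathrm{V}\cap\mathrm{W}'|<|\mathrm{V}|$, and the reduction rule together with the induction hypothesis gives $f(\mathrm{V})=f(\mathrm{V}\cap\mathrm{W}')=f(\emptyset)$. Taking $\mathrm{V}=\mathrm{U}$ proves~\eqref{eq:lemma3}.

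The main obstacle is the key step. Security alone only lets one condition on interference messages, and conditioning on arbitrary demand messages of $\mathrm{W}'$ could in principle break the independence. Correctness is what allows conditioning on any subset of $\mathbf{X}_{\mathrm{W}'}$ for free, because these messages are already determined by what the user sees. I must also check that the bound passes cleanly from all $N$ answers to the single pair $(\mathbf{A}_{n},\mathbf{Q}_{n})$: this is just monotonicity of mutual information, so I expect no need for Lemma~\ref{lemma: given_Q1:N_=_given_Qn} here.
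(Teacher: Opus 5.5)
Your proof is correct, and it is organized differently from the paper's.

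The paper first uses \eqref{eq:Q_indep_X_S} to rewrite the claim as $H(\mathbf{X}_{\mathrm{U}}|\mathbf{Q}_n^{[\mathrm{W}]},\mathbf{A}_n^{[\mathrm{W}]})=H(\mathbf{X}_{\mathrm{U}})$, so it works on the message side. It then splits $\mathrm{U}$ into $\mathrm{U}\cap\mathrm{W}$ and $\mathrm{U}\setminus\mathrm{W}$. The second part is handled by correctness plus security for $\mathrm{W}$. The first part is handled by an induction: for a chosen $l$, it switches to some $\mathrm{W}'\not\ni l$ via the privacy condition \eqref{eq:privacy} and then invokes security for $\mathrm{W}'$.

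You instead stay on the answer side with the set function $f$ and make it demand-independent once, through Lemma~\ref{lemma: symmetry}. You then extract a single reduction rule $f(\mathrm{V})=f(\mathrm{V}\cap\mathrm{W}')$, valid for every candidate $\mathrm{W}'$. Both halves of the paper's argument are instances of this rule; the $\mathrm{U}\setminus\mathrm{W}$ part is just the rule with $\mathrm{W}'=\mathrm{W}$. The induction then collapses to one line.

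What your route buys: it is shorter, it never uses \eqref{eq:Q_indep_X_S}, and it makes explicit that conditioning on demand messages of $\mathrm{W}'$ needs correctness, not just security. The paper's step \eqref{eq:Xj'_plus_Xj''_Xim} cites only \eqref{eq:security}, although $\mathbf{X}_{\mathrm{V}_1}$ with $\mathrm{V}_1\subseteq\mathrm{W}'$ sits in the conditioning. That step really relies on the correctness-plus-security argument from the first half of the paper's proof, applied to $\mathrm{W}'$.

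What the paper's route buys: along the way it establishes the interpretable statement that a single server's query--answer pair is independent of every $\mathbf{X}_{\mathrm{U}}$. That statement is equivalent to the lemma once \eqref{eq:Q_indep_X_S} is available.
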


\begin{proof}
  We have 
\begin{align}
         & I(\mathbf{A}_{n}^{[\mathrm{W}]}; \mathbf{X}_{\mathrm{U}} |  \mathbf{Q}_{n}^{[\mathrm{W}]}) \nonumber
         \\ & \quad = H(\mathbf{A}_{n}^{[\mathrm{W}]} |   \mathbf{Q}_{n}^{[\mathrm{W}]} )  - H(\mathbf{A}_{n}^{[\mathrm{W}]} | \mathbf{X}_{\mathrm{U}},  \mathbf{Q}_{n}^{[\mathrm{W}]} ) \nonumber
         \\ & \quad = H( \mathbf{X}_{\mathrm{U}} | \mathbf{Q}_{n}^{[\mathrm{W}]}) - H( \mathbf{X}_{\mathrm{U}} | \mathbf{Q}_{n}^{[\mathrm{W}]}, \mathbf{A}_{n}^{[\mathrm{W}]}) \nonumber
         \\ & \quad = H( \mathbf{X}_{\mathrm{U}} ) - H( \mathbf{X}_{\mathrm{U}} | \mathbf{Q}_{n}^{[\mathrm{W}]}, \mathbf{A}_{n}^{[\mathrm{W}]}) \label{eq:H_Xw_minus_H_Xw_given_Qn_An_W'}
    \end{align}
where~\eqref{eq:H_Xw_minus_H_Xw_given_Qn_An_W'} follows from~\eqref{eq:Q_indep_X_S}.
To prove~\eqref{eq:lemma3}, it thus suffices to show that
\[
H ( \mathbf{X}_{\mathrm{U}} ) = H ( \mathbf{X}_{\mathrm{U}} | \mathbf{Q}_{n}^{[\mathrm{W}]}, \mathbf{A}_{n}^{[\mathrm{W}]} ).
\]

Let ${\mathrm{V}\coloneqq\mathrm{U} \cap \mathrm{W}}$ and  ${\mathrm{V}'\coloneqq\mathrm{U} \setminus \mathrm{W}}$.
Then,
\begin{align}
         & H( \mathbf{X}_{\mathrm{U}} | \mathbf{Q}_{n}^{[\mathrm{W}]}, \mathbf{A}_{n}^{[\mathrm{W}]}) \nonumber 
         \\ & \quad = H( \mathbf{X}_{\mathrm{V}} , \mathbf{X}_{\mathrm{V}'} | \mathbf{Q}_{n}^{[\mathrm{W}]}, \mathbf{A}_{n}^{[\mathrm{W}]}) \nonumber
         \\ & \quad = H( \mathbf{X}_{\mathrm{V}} | \mathbf{Q}_{n}^{[\mathrm{W}]}, \mathbf{A}_{n}^{[\mathrm{W}]}) + H(\mathbf{X}_{\mathrm{V}'} | \mathbf{X}_{\mathrm{V}}, \mathbf{Q}_{n}^{[\mathrm{W}]}, \mathbf{A}_{n}^{[\mathrm{W}]}). \label{eq:Xw_=_Xj_plus_Xj1_given_Xj}
    \end{align}

We first lower bound
$H(\mathbf{X}_{\mathrm{V}'} |
\mathbf{X}_{\mathrm{V}},\mathbf{Q}_{n}^{[\mathrm{W}]},
\mathbf{A}_{n}^{[\mathrm{W}]})$
as follows: 
\begin{align}
        & H(\mathbf{X}_{\mathrm{V}'} | \mathbf{X}_{\mathrm{V}}, \mathbf{Q}_{n}^{[\mathrm{W}]}, \mathbf{A}_{n}^{[\mathrm{W}]}) \nonumber 
    \\ & \quad \geq  H(\mathbf{X}_{\mathrm{V}'} | \mathbf{X}_{\mathrm{W}}, \mathbf{Q}_{[1:N]}^{[\mathrm{W}]}, \mathbf{A}_{[1:N]}^{[\mathrm{W}]}) \label{eq:X_J1_given_X_W2_Q_A}
    \\ & \quad = H(\mathbf{X}_{\mathrm{V}'} | \mathbf{X}_{\mathrm{W}}, \mathbf{Q}_{[1:N]}^{[\mathrm{W}]}, \mathbf{A}_{[1:N]}^{[\mathrm{W}]}) \nonumber \\ 
    & \quad \quad + H ( \mathbf{X}_{\mathrm{W}} | \mathbf{Q}_{[1:N]}^{[\mathrm{W}]}, \mathbf{A}_{[1:N]}^{[\mathrm{W}]}) \label{eq:recoverability_for_X_w}
    \\ & \quad = H(\mathbf{X}_{\mathrm{V}'}, \mathbf{X}_{\mathrm{W}}| \mathbf{Q}_{[1:N]}^{[\mathrm{W}]}, \mathbf{A}_{[1:N]}^{[\mathrm{W}]}) \label{eq:X_J1_X_w_given_Q_A}
    \\ & \quad = H(\mathbf{X}_{\mathrm{V}'} |\mathbf{Q}_{[1:N]}^{[\mathrm{W}]}, \mathbf{A}_{[1:N]}^{[\mathrm{W}]}) \nonumber \\ 
    & \quad \quad + H( \mathbf{X}_{\mathrm{W}}| \mathbf{X}_{\mathrm{V}'}, \mathbf{Q}_{[1:N]}^{[\mathrm{W}]}, \mathbf{A}_{[1:N]}^{[\mathrm{W}]}) \label{eq:X_J1_given_Q_A_plus_X_w_given_X_J1_Q_A}
    \\ & \quad = H(\mathbf{X}_{\mathrm{V}'}) \label{eq:X_J1}
    \end{align}
where~\eqref{eq:X_J1_given_X_W2_Q_A} follows from the fact that conditioning does not increase entropy (and since ${\mathrm{V} \subseteq \mathrm{W}}$);~\eqref{eq:recoverability_for_X_w} follows from~\eqref{eq:correctness};~\eqref{eq:X_J1_X_w_given_Q_A} and~\eqref{eq:X_J1_given_Q_A_plus_X_w_given_X_J1_Q_A} follow from the chain rule; and~\eqref{eq:X_J1} follows from both~\eqref{eq:correctness} and~\eqref{eq:security}.
Obviously, the reverse inequality, 
\[
H(\mathbf{X}_{\mathrm{V}'} | \mathbf{X}_{\mathrm{V}}, \mathbf{Q}_{n}^{[\mathrm{W}]}, \mathbf{A}_{n}^{[\mathrm{W}]}) \leq H(\mathbf{X}_{\mathrm{V}'}),
\]
also holds, and thus we arrive at 
\begin{equation}
\label{eq:Xj1_given_Xj_QA=_Xj1}
    H(\mathbf{X}_{\mathrm{V}'} | \mathbf{X}_{\mathrm{V}}, \mathbf{Q}_{n}^{[\mathrm{W}]}, \mathbf{A}_{n}^{[\mathrm{W}]})=  H (\mathbf{X}_{\mathrm{V}'} ).
\end{equation}

Next, we consider the term $H ( \mathbf{X}_{\mathrm{V}} | \mathbf{Q}_{n}^{[\mathrm{W}]}, \mathbf{A}_{n}^{[\mathrm{W}]} )$ and show that $H ( \mathbf{X}_{\mathrm{V}} | \mathbf{Q}_{n}^{[\mathrm{W}]}, \mathbf{A}_{n}^{[\mathrm{W}]} ) =  H (\mathbf{X}_{\mathrm{V}} )$ for any ${\mathrm{V} \subseteq \mathrm{W}}$.
We prove this by induction on $|\mathrm{V}|$. 

For the base case, if $|\mathrm{V}|=0$, then $H( \mathbf{X}_{\emptyset} | \mathbf{Q}_{n}^{[\mathrm{W}]}, \mathbf{A}_{n}^{[\mathrm{W}]}) =  H (\mathbf{X}_{\emptyset})=0$.
Now, suppose that, for some $k\geq 0$, the statement holds for every $\mathrm{V} \subseteq \mathrm{W}$ such that $|\mathrm{V}| \leq k$, i.e.,
\begin{equation}
\label{eq:induction_statement}
    H ( \mathbf{X}_{\mathrm{V}} | \mathbf{Q}_{n}^{[\mathrm{W}]}, \mathbf{A}_{n}^{[\mathrm{W}]} ) =  H (\mathbf{X}_{\mathrm{V}} ), \quad \forall \mathrm{V} \subseteq \mathrm{W}, |\mathrm{V}| \leq k.
\end{equation}

We next prove the statement for any
$\mathrm{V} \subseteq \mathrm{W}$ such that ${|\mathrm{V}|=k+1}$. 
Fix an arbitrary ${l\in \mathrm{V}}$.
There exists a demand subset ${\mathrm{W}'\neq \mathrm{W}}$ such that ${l \notin \mathrm{W}'}$. Indeed, otherwise, $l$ would belong to every candidate demand index set, contradicting the assumption that ${\big|\bigcap_{j=1}^{E}\mathrm{W}_j\big|=0}$.

Let $\mathrm{V}_1\coloneqq\mathrm{V}\cap \mathrm{W}'$ and $\mathrm{V}_2\coloneqq\mathrm{V}\setminus (\mathrm{W}'\cup \{l\})$. 
Note that ${|\mathrm{V}_1|\leq k}$, since ${\mathrm{V}_1\subset \mathrm{V}}$, ${|\mathrm{V}|=k+1}$, and ${l\in \mathrm{V}\setminus \mathrm{V}_1}$.
Then, 
\begin{align}
& H (\mathbf{X}_{\mathrm{V}} | \mathbf{Q}_{n}^{[\mathrm{W}]},\mathbf{A}_{n}^{[\mathrm{W}]} ) \nonumber
\\ & \quad = H (\mathbf{X}_{\mathrm{V}_1} , \mathbf{X}_{\mathrm{V}_2}, \mathbf{X}_{l} | \mathbf{Q}_{n}^{[\mathrm{W}]},\mathbf{A}_{n}^{[\mathrm{W}]} ) \nonumber
\\ & \quad = H (\mathbf{X}_{\mathrm{V}_1}  | \mathbf{Q}_{n}^{[\mathrm{W}]},\mathbf{A}_{n}^{[\mathrm{W}]} ) \nonumber
\\ & \quad \quad +  H (\mathbf{X}_{\mathrm{V}_2}, \mathbf{X}_{l} | \mathbf{X}_{\mathrm{V}_1}, \mathbf{Q}_{n}^{[\mathrm{W}]},\mathbf{A}_{n}^{[\mathrm{W}]} ) \label{eq:Xj1
_given_QA_plus_Xj2_Xl_given_Xj1_QA}
\\ & \quad = H (\mathbf{X}_{\mathrm{V}_1} )  + H (\mathbf{X}_{\mathrm{V}_2}, \mathbf{X}_{l} | \mathbf{X}_{\mathrm{V}_1}, \mathbf{Q}_{n}^{[\mathrm{W}']},\mathbf{A}_{n}^{[\mathrm{W}']}) \label{eq:Xj'_plus_Xj''_Xim_given_Xj'_QA}
\\ & \quad = H (\mathbf{X}_{\mathrm{V}_1}  )  + H (\mathbf{X}_{\mathrm{V}_2}, \mathbf{X}_{l} ) \label{eq:Xj'_plus_Xj''_Xim}
\\ & \quad = H (\mathbf{X}_{\mathrm{V}} ) \label{eq:Xj}
\end{align}
where~\eqref{eq:Xj1
_given_QA_plus_Xj2_Xl_given_Xj1_QA} follows from the chain rule;~\eqref{eq:Xj'_plus_Xj''_Xim_given_Xj'_QA} follows from~\eqref{eq:induction_statement} and~\eqref{eq:privacy};~\eqref{eq:Xj'_plus_Xj''_Xim} follows from~\eqref{eq:security} since $(\mathrm{V}_2 \cup \{l\})\cap \mathrm{W}'=\emptyset$; and~\eqref{eq:Xj} follows from the independence of the messages.
This completes the induction step and hence the proof.

Thus, for any ${\mathrm{W}\in \{\mathrm{W}_1,\dots,\mathrm{W}_E\}}$, any ${\mathrm{V} \subseteq \mathrm{W}}$, and any ${n \in [1:N]}$, we have
\begin{equation}
\label{eq:Xj_given_QA=_Xj}
    H ( \mathbf{X}_{\mathrm{V}} | \mathbf{Q}_{n}^{[\mathrm{W}]}, \mathbf{A}_{n}^{[\mathrm{W}]} ) =  H (\mathbf{X}_{\mathrm{V}} ). 
\end{equation}

Finally,  combining~\eqref{eq:Xw_=_Xj_plus_Xj1_given_Xj}, \eqref{eq:Xj1_given_Xj_QA=_Xj1},  and~\eqref{eq:Xj_given_QA=_Xj}, it follows that
\begin{align}
         & H( \mathbf{X}_{\mathrm{U}} |  \mathbf{Q}_{n}^{[\mathrm{W}]}, \mathbf{A}_{n}^{[\mathrm{W}]}) \nonumber
         \\ & \quad = H (\mathbf{X}_{\mathrm{V}} ) +  H (\mathbf{X}_{\mathrm{V}'} ) \nonumber  
         \\ & \quad = H(\mathbf{X}_{\mathrm{U}}  ), \label{eq:X_T}
    \end{align}
where~\eqref{eq:X_T} follows from the independence of the messages. 
This completes the proof.
\end{proof}

We now prove the rate upper bound in Theorem~\ref{thm:SPSSR_capacity}.
Recall from~\eqref{eq:RateDef} that the rate is given by the ratio of $H(\mathbf{X}_{\mathbf{W}})$ to $\sum_{n=1}^N H(\mathbf{A}^{[\mathbf{W}]}_n|\mathbf{Q}^{[\mathbf{W}]}_n)$.
Since the messages are independent and uniformly distributed over
$\mathbb{F}_q^{L}$, we have
\begin{equation}
\label{eq:H_Xw}
H(\mathbf{X}_{\mathbf{W}}) = DL,
\end{equation}
see~\cite{IH2026PSSRarXiv} for details.
Thus, to upper bound the rate, it remains to lower bound
$\sum_{n=1}^N H(\mathbf{A}^{[\mathbf{W}]}_n
|\mathbf{Q}^{[\mathbf{W}]}_n)$.
We do so using an approach similar to those in~\cite{SJ2019Sym,WBU2022} and proceed as follows:
\begin{align}
 & \sum_{n=1}^N H(\mathbf{A}^{[\mathbf{W}]}_n|\mathbf{Q}^{[\mathbf{W}]}_n) \nonumber
 \\  & \quad \geq \sum_{n=1}^{N} H(\mathbf{A}_{n}^{[\mathbf{W}]} | \mathbf{Q}_{n}^{[\mathbf{W}]},\mathbf{W})\label{eq:cond_on_W}
 \\  & \quad  = \sum_{n=1}^{N} H(\mathbf{A}_{n}^{[\mathrm{W}]} | \mathbf{Q}_{n}^{[\mathrm{W}]}) \label{eq:remove_cond_on_W}
 \\  & \quad  = \sum_{n=1}^{N} H(\mathbf{A}_{n}^{[\mathrm{W}]} | \mathbf{Q}_{[1:N]}^{[\mathrm{W}]}) \label{eq: sum_H_An_given_Q1:N}
 \\ & \quad \geq H( \mathbf{A}_{[1:N]}^{[\mathrm{W}]} |  \mathbf{Q}_{[1:N]}^{[\mathrm{W}]})  \label{eq: H_A1:N_given_Q1:N}
 \\ & \quad = H( \mathbf{A}_{[1:N]}^{[\mathrm{W}]},  \mathbf{Q}_{[1:N]}^{[\mathrm{W}]}) - H(  \mathbf{Q}_{[1:N]}^{[\mathrm{W}]}) \label{eq: H_A1:N_Q1:N_minus_H_Q1:N}
 \\ & \quad = H( \mathbf{A}_{[1:N]}^{[\mathrm{W}]},  \mathbf{Q}_{[1:N]}^{[\mathrm{W}]}) - H( \mathbf{A}_{[1:N]}^{[\mathrm{W}]},  \mathbf{Q}_{[1:N]}^{[\mathrm{W}]} |\mathbf{X}_{\mathrm{W}} ) \nonumber 
 \\ & \quad \quad + H( \mathbf{A}_{[1:N]}^{[\mathrm{W}]},  \mathbf{Q}_{[1:N]}^{[\mathrm{W}]} |\mathbf{X}_{\mathrm{W}} )- H(  \mathbf{Q}_{[1:N]}^{[\mathrm{W}]}) \nonumber 
 \\ & \quad = I( \mathbf{X}_{\mathrm{W}}; \mathbf{A}_{[1:N]}^{[\mathrm{W}]},  \mathbf{Q}_{[1:N]}^{[\mathrm{W}]}) \nonumber \\ 
 & \quad \quad + H( \mathbf{A}_{[1:N]}^{[\mathrm{W}]},  \mathbf{Q}_{[1:N]}^{[\mathrm{W}]} |\mathbf{X}_{\mathrm{W}} )- H(  \mathbf{Q}_{[1:N]}^{[\mathrm{W}]})  \label{eq: I_Xw_A1:N_Q1:N_plus_H_A1:N_Q1:N_given_Xw_minus_H_Q1:N}
 \\ & \quad = H( \mathbf{X}_{\mathrm{W}}) - H( \mathbf{X}_{\mathrm{W}} | \mathbf{A}_{[1:N]}^{[\mathrm{W}]},  \mathbf{Q}_{[1:N]}^{[\mathrm{W}]}) \nonumber 
 \\ & \quad \quad + H( \mathbf{A}_{[1:N]}^{[\mathrm{W}]},  \mathbf{Q}_{[1:N]}^{[\mathrm{W}]} |\mathbf{X}_{\mathrm{W}} )- H(  \mathbf{Q}_{[1:N]}^{[\mathrm{W}]}) \label{eq: H_Xw_minus_H_Xw_given_A1:N_Q1:N_plus_H_A1:N_Q1:N_given_Xw_minus_H_Q1:N}
 \\ & \quad = DL + H( \mathbf{A}_{[1:N]}^{[\mathrm{W}]},  \mathbf{Q}_{[1:N]}^{[\mathrm{W}]} |\mathbf{X}_{\mathrm{W}} )- H(  \mathbf{Q}_{[1:N]}^{[\mathrm{W}]}) \label{eq: DL_plus_H_A1:N_Q1:N_given_Xw_minus_H_Q1:N}
 \\ & \quad = DL + H(  \mathbf{Q}_{[1:N]}^{[\mathrm{W}]} |\mathbf{X}_{\mathrm{W}} ) \nonumber \\ 
 & \quad \quad + H( \mathbf{A}_{[1:N]}^{[\mathrm{W}]}|  \mathbf{Q}_{[1:N]}^{[\mathrm{W}]} ,\mathbf{X}_{\mathrm{W}} )- H(  \mathbf{Q}_{[1:N]}^{[\mathrm{W}]}) \label{eq: DL_plus_H_Q1:N_given_Xw_plus_H_A1:N_given_Q1:N_Xw_minus_H_Q1:N}
 \\ & \quad = DL + H( \mathbf{A}_{[1:N]}^{[\mathrm{W}]}|  \mathbf{Q}_{[1:N]}^{[\mathrm{W}]} ,\mathbf{X}_{\mathrm{W}} ) \label{eq: DL_plus_H_A1:N_given_Q1:N_Xw}
 \\ & \quad \geq DL +\frac{1}{N} \sum_{n=1}^N H( \mathbf{A}_{n}^{[\mathrm{W}]}|  \mathbf{Q}_{[1:N]}^{[\mathrm{W}]} ,\mathbf{X}_{\mathrm{W}} ) \label{eq: DL_plus_1/N_sum_H_An_given_Q1:N_Xw}
 \\ & \quad = DL +\frac{1}{N} \sum_{n=1}^N H( \mathbf{A}_{n}^{[\mathrm{W}]}|  \mathbf{Q}_{n}^{[\mathrm{W}]} ,\mathbf{X}_{\mathrm{W}} ) \label{eq: DL_plus_1/N_sum_H_An_given_Qn_Xw}
 \\ & \quad = DL +\frac{1}{N} \sum_{n=1}^N H( \mathbf{A}_{n}^{[\mathrm{W}]}|  \mathbf{Q}_{n}^{[\mathrm{W}]} ) \label{eq: DL_plus_1/N_sum_H_An_given_Qn}
\end{align}
where~\eqref{eq:cond_on_W} follows since conditioning cannot increase entropy; 
\eqref{eq:remove_cond_on_W} follows from Lemma~\ref{lemma: symmetry};
\eqref{eq: sum_H_An_given_Q1:N} 
and~\eqref{eq: DL_plus_1/N_sum_H_An_given_Qn_Xw} follow from Lemma~\ref{lemma: given_Q1:N_=_given_Qn}; 
\eqref{eq: H_A1:N_given_Q1:N} follows from the subadditivity of entropy; 
\eqref{eq: H_A1:N_Q1:N_minus_H_Q1:N} and~\eqref{eq: DL_plus_H_Q1:N_given_Xw_plus_H_A1:N_given_Q1:N_Xw_minus_H_Q1:N} follow from the chain rule; 
\eqref{eq: I_Xw_A1:N_Q1:N_plus_H_A1:N_Q1:N_given_Xw_minus_H_Q1:N} and~\eqref{eq: H_Xw_minus_H_Xw_given_A1:N_Q1:N_plus_H_A1:N_Q1:N_given_Xw_minus_H_Q1:N} follow from the definition of mutual information; 
\eqref{eq: DL_plus_H_A1:N_Q1:N_given_Xw_minus_H_Q1:N} follows from~\eqref{eq:correctness}; 
\eqref{eq: DL_plus_H_A1:N_given_Q1:N_Xw} follows from~\eqref{eq:Q_indep_X_S}; 
\eqref{eq: DL_plus_1/N_sum_H_An_given_Q1:N_Xw} follows from the monotonicity of entropy; 
and~\eqref{eq: DL_plus_1/N_sum_H_An_given_Qn} follows from Lemma~\ref{lemma: dropping_undesired_subset_from_condition}.

Rearranging~\eqref{eq: DL_plus_1/N_sum_H_An_given_Qn} yields
\begin{equation}
\label{eq:NDL/N-1}
    \sum_{n=1}^N H(\mathbf{A}^{[\mathbf{W}]}_n|\mathbf{Q}^{[\mathbf{W}]}_n) \geq \frac{NDL}{N-1}.
\end{equation}
Combining~\eqref{eq:H_Xw} and~\eqref{eq:NDL/N-1} yields the rate upper bound ${1-1/N}$, matching $R$ in~\eqref{eq:SPSSR_cap}.

\subsection{Lower Bounding the Shared-Randomness Ratio}
\label{sec:M/L_conv_proof_SPSSR}

Next, we derive a lower bound on the required number of shared secret keys as follows: 
\begin{align}
M & = H(\mathbf{S}) 
 \nonumber \\ & = \frac{1}{N} \sum_{n=1}^N 
 H(\mathbf{S} | \mathbf{Q}^{[\mathrm{W}]}_n, \mathbf{X}_{[1:K]})
 \label{eq:H_S_given_Q_X}
 \\ & \geq \frac{1}{N} \sum_{n=1}^N  
 \Big[
 H(\mathbf{S} | \mathbf{Q}^{[\mathrm{W}]}_n, \mathbf{X}_{[1:K]})
 \nonumber \\ & \qquad\qquad
 - H(\mathbf{S} | \mathbf{A}^{[\mathrm{W}]}_n,
 \mathbf{Q}^{[\mathrm{W}]}_n, \mathbf{X}_{[1:K]})
 \Big]
 \label{eq:H_S_given_Q_X_minus_H_S_given_A_Q_X}
 \\ & = \frac{1}{N} \sum_{n=1}^N 
 I(\mathbf{S}; \mathbf{A}^{[\mathrm{W}]}_n
 | \mathbf{Q}^{[\mathrm{W}]}_n, \mathbf{X}_{[1:K]})
 \label{eq:I_SandA_given_Q_X}
 \\ & = \frac{1}{N} \sum_{n=1}^N  
 \Big[
 H(\mathbf{A}^{[\mathrm{W}]}_n
 | \mathbf{Q}^{[\mathrm{W}]}_n, \mathbf{X}_{[1:K]})
 \nonumber \\ & \qquad\qquad
 - H(\mathbf{A}^{[\mathrm{W}]}_n
 | \mathbf{S}, \mathbf{Q}^{[\mathrm{W}]}_n, \mathbf{X}_{[1:K]})
 \Big]
 \label{eq:H_A_given_Q_X_minus_H_A_given_S_Q_X}
 \\ & = \frac{1}{N} \sum_{n=1}^N  
 H(\mathbf{A}^{[\mathrm{W}]}_n
 | \mathbf{Q}^{[\mathrm{W}]}_n, \mathbf{X}_{[1:K]})
 \label{eq:H_A_given_Q_X}
 \\ & = \frac{1}{N} \sum_{n=1}^N  
 \Big[
 H(\mathbf{A}^{[\mathrm{W}]}_n
 | \mathbf{Q}^{[\mathrm{W}]}_n, \mathbf{X}_{\mathrm{W}})
 \nonumber \\ & \qquad\qquad
 - I(\mathbf{X}_{\overline{\mathrm{W}}};
 \mathbf{A}^{[\mathrm{W}]}_n
 | \mathbf{Q}^{[\mathrm{W}]}_n, \mathbf{X}_{\mathrm{W}})
 \Big]
 \label{eq:H_A_given_Q_Xw_minus_I_XinterfandA_givenQ_Xw}
 \\ & = \frac{1}{N} \sum_{n=1}^N  
 \Big[
 H(\mathbf{A}^{[\mathrm{W}]}_n
 | \mathbf{Q}^{[\mathrm{W}]}_n, \mathbf{X}_{\mathrm{W}})
 \nonumber \\ & \qquad\qquad
 - H(\mathbf{X}_{\overline{\mathrm{W}}}
 | \mathbf{Q}^{[\mathrm{W}]}_n, \mathbf{X}_{\mathrm{W}})
 \nonumber \\ & \qquad\qquad
 + H(\mathbf{X}_{\overline{\mathrm{W}}}
 | \mathbf{A}^{[\mathrm{W}]}_n,
 \mathbf{Q}^{[\mathrm{W}]}_n, \mathbf{X}_{\mathrm{W}})
 \Big]
 \label{eq:H_A_given_Q_Xw_minus_H_Xinterf_givenQ_Xw_plus_H_Xinterf_givenAQXw}
 \\ & = \frac{1}{N} \sum_{n=1}^N  
 \Big[
 H(\mathbf{A}^{[\mathrm{W}]}_n
 | \mathbf{Q}^{[\mathrm{W}]}_n, \mathbf{X}_{\mathrm{W}})
 \nonumber \\ & \qquad\qquad
 - H(\mathbf{X}_{\overline{\mathrm{W}}})
 + H(\mathbf{X}_{\overline{\mathrm{W}}})
 \Big]
 \label{eq:H_A_given_Q_Xw_minus_H_Xinterf_plus_H_Xinterf}
 \\ & = \frac{1}{N} \sum_{n=1}^N  
 H(\mathbf{A}^{[\mathrm{W}]}_n
 | \mathbf{Q}^{[\mathrm{W}]}_n)
 \label{eq:H_A_given_Q}
 \\ & \geq \frac{DL}{N-1}
 \label{eq:DL/N-1}
\end{align}
where~\eqref{eq:H_S_given_Q_X} follows from~\eqref{eq:Q_indep_X_S};~\eqref{eq:H_S_given_Q_X_minus_H_S_given_A_Q_X} follows since entropy is non-negative;~\eqref{eq:I_SandA_given_Q_X},~\eqref{eq:H_A_given_Q_X_minus_H_A_given_S_Q_X},~\eqref{eq:H_A_given_Q_Xw_minus_I_XinterfandA_givenQ_Xw} and~\eqref{eq:H_A_given_Q_Xw_minus_H_Xinterf_givenQ_Xw_plus_H_Xinterf_givenAQXw} follow from the definition of mutual information;~\eqref{eq:H_A_given_Q_X} follows from~\eqref{eq:A_func_Q_X_S};~\eqref{eq:H_A_given_Q_Xw_minus_H_Xinterf_plus_H_Xinterf} follows from the independence of the messages and queries and~\eqref{eq:security};~\eqref{eq:H_A_given_Q} follows from Lemma~\ref{lemma: dropping_undesired_subset_from_condition}; and~\eqref{eq:DL/N-1} follows from~\eqref{eq:NDL/N-1}.

Rearranging~\eqref{eq:DL/N-1} yields 
\[\frac{M}{L}\geq \frac{D}{N-1}.\] 
Thus, the minimum shared-randomness ratio required to achieve rate $R$ is ${D/(N-1)}$, which matches $M/L$ in~\eqref{eq:kappa_SPSSR}.

\subsection{Lower Bounding the Subpacketization Level}
\label{sec:L_conv_proof_SPSSR}

Any balanced linear SPSSR scheme achieving rate $R$ retrieves a total of $DL/R$ linear combinations of message subpackets and shared secret keys, evenly distributed across the $N$ servers. 
Hence, ${DL}/{(NR)}$ must be an integer. 
Using~\eqref{eq:SPSSR_cap}, 
\[
\frac{DL}{NR} = \frac{DL}{N-1},
\]
so the smallest feasible subpacketization level is ${(N-1)/\gcd(D,N-1)}$, matching $L$ in~\eqref{eq:L*_SPSSR}.

\section{Achievability Proofs}
\label{sec:achiev_proof_SPSSR}

In this section, we present a single balanced ${\{0,1\}}$-linear SPSSR scheme that applies uniformly to every demand family. 
The scheme achieves the rate $R$ given in Theorem~\ref{thm:SPSSR_capacity} and requires the shared-randomness ratio $M/L$ and subpacketization level $L$ specified in Theorem~\ref{thm:SPSSR_L*_bounds}.

\subsection{Achievable Scheme}

Each message is divided into ${L=(N-1)/G}$ subpackets, where ${G\coloneqq\gcd(D,N-1)}$. 
For each ${i\in[1:K]}$, we denote the $l$th subpacket of message
$\mathrm{X}_i$ by $\mathrm{X}_{i,l}$.
 
The user first partitions
$\mathrm{W}=\{i_1,\dots,i_D\}$ into ${M=D/G}$ groups of size $G$, with the $m$th group given by
\[
\{i_{(m-1)G+1},\dots,i_{mG}\}, \qquad m\in[1:M].
\]
Then, the user generates an ${M\times KL}$ matrix whose entries are drawn
independently and uniformly from ${\{0,1\}}$, and sends this matrix as the query to the first server, i.e.,
\begin{equation*}
    \mathrm{Q}^{[\mathrm{W}]}_{1} =
    \begin{bmatrix}
        h_{1,1}^{(1)} & \cdots & h_{1,L}^{(1)} & \cdots & h_{K,1}^{(1)} & \cdots & h_{K,L}^{(1)} \\
        \vdots & \vdots & \vdots & \vdots & \vdots & \vdots & \vdots\\
        h_{1,1}^{(M)} & \cdots & h_{1,L}^{(M)} & \cdots & h_{K,1}^{(M)} & \cdots & h_{K,L}^{(M)}
    \end{bmatrix},
\end{equation*}
where ${h_{i,l}^{(m)} \in \{0,1\}}$ for all ${m \in [1:M]}$, ${i \in [1:K]}$, and ${l \in [1:L]}$.

The queries for the remaining servers are then generated from
$\mathrm{Q}^{[\mathrm{W}]}_{1}$ by flipping one entry in each row.
For each ${n\in [2:N]}$, define
\[
g_n\coloneqq \left\lceil \frac{n-1}{L}\right\rceil,
\qquad
l_n\coloneqq (n-1)-(g_n-1)L .
\]
Then each ${n\in[2:N]}$ can be written uniquely as
\[
n=1+(g_n-1)L+l_n,
\]
which establishes a one-to-one correspondence between
${n\in[2:N]}$ and ${(g_n,l_n)\in[1:G]\times[1:L]}$.
In particular, ${n=2}$ corresponds to ${(g_2,l_2)=(1,1)}$, while
${n=N}$ corresponds to ${(g_N,l_N)=(G,L)}$.

For each server ${n\in [2:N]}$, the query $\mathrm{Q}^{[\mathrm{W}]}_{n}$ is an ${M \times KL}$ matrix obtained from $\mathrm{Q}^{[\mathrm{W}]}_{1}$ as follows. 
For each ${m\in[1:M]}$, the user flips, in the $m$th row, the entry
$h_{i,l}^{(m)}$ with ${i=i_{(m-1)G+g_n}}$ and ${l=l_n}$.
That is, if this entry is equal to $1$ in $\mathrm{Q}^{[\mathrm{W}]}_{1}$,
it is set to $0$ in $\mathrm{Q}^{[\mathrm{W}]}_{n}$, and if it is equal to
$0$, it is set to $1$.
We denote the flipped version of $h_{i,l}^{(m)}$ by
$\overline{h}_{i,l}^{(m)}$, for all ${m \in [1:M]}$, ${i \in [1:K]}$, and ${l \in [1:L]}$.

Upon receiving $\mathrm{Q}^{[\mathrm{W}]}_{n}$, each server ${n\in[1:N]}$ returns $M$ masked linear combinations of message subpackets.
For each ${m\in[1:M]}$, the coefficients of the $m$th linear combination are specified by the $m$th row of $\mathrm{Q}^{[\mathrm{W}]}_{n}$, and the corresponding shared secret key $\mathrm{S}_m$ is added as a mask.
For example, the answer returned by the first server is
\begin{equation*}
    \mathrm{A}^{[\mathrm{W}]}_{1} =
    \begin{bmatrix}
        \sum_{i=1}^{K} \sum_{l=1}^{L} h_{i,l}^{(1)} \mathrm{X}_{i,l} + \mathrm{S}_1 \\
        \vdots \\
        \sum_{i=1}^{K} \sum_{l=1}^{L} h_{i,l}^{(M)} \mathrm{X}_{i,l} + \mathrm{S}_M \\
    \end{bmatrix}.
\end{equation*}

\subsection{Proof of Optimality}

Since the user retrieves $M$ combinations from each server, each of the size of one message subpacket, the retrieval rate of the scheme is
${DL/(NM)=1-1/N}$, matching~\eqref{eq:SPSSR_cap}.
The scheme requires shared-randomness ratio ${M/L=D/(N-1)}$, which matches~\eqref{eq:kappa_SPSSR}.
The scheme uses subpacketization level ${L=(N-1)/G}$, matching~\eqref{eq:L*_SPSSR}.
This proves the optimality of the scheme in terms of retrieval rate, shared-randomness ratio, and subpacketization level. 

\subsection{Proof of Correctness}

For each ${n\in[2:N]}$ and each ${m\in[1:M]}$, the entry $h_{i,l}^{(m)}$ in $\mathrm{Q}^{[\mathrm{W}]}_{1}$, with ${i=i_{(m-1)G+g_n}}$ and ${l=l_n}$, is the combining coefficient of the $l$th subpacket of the demand message indexed by $i$ in the $m$th combination retrieved from Server~1.
The corresponding coefficient in the $m$th combination retrieved from each server $n$ is flipped to $\overline{h}_{i,l}^{(m)}$, while all other coefficients remain the same as in Server~1.
Additionally, the same secret key $\mathrm{S}_m$ is used in the $m$th combination retrieved from all servers. Therefore, subtracting $\mathrm{A}^{[\mathrm{W}]}_{1}$ from $\mathrm{A}^{[\mathrm{W}]}_{n}$ cancels all unchanged terms and recovers $M$ subpackets from $M$ distinct demand messages, one from each component of the vector $\mathrm{A}^{[\mathrm{W}]}_{n}-\mathrm{A}^{[\mathrm{W}]}_{1}$. Considering all ${n\in[2:N]}$, the user recovers a total of ${M(N-1)}$ demand subpackets, which equals the total number of demand subpackets $DL$.

\subsection{Proof of Privacy}

The scheme satisfies the privacy condition since, for every demand set, each server observes a query matrix of the same size with entries that appear independent and uniformly random from that server's perspective. 
Therefore, for each ${n \in [1:N]}$, the query $\mathrm{Q}^{[\mathrm{W}]}_{n}$ reveals no information about the demand index set to server $n$.

\subsection{Proof of Security}

The scheme also satisfies the security condition.
At each server, every linear combination of message subpackets is masked by a shared secret key, so the answer from any single server reveals no information about the messages.
After combining answers from multiple servers, the user recovers only demand subpackets, while all interference-message terms cancel out.
Thus, the user obtains no information about the interference messages.

\section{An Illustrative Example}

In this section, we present an illustrative example of the proposed SPSSR scheme.

Consider a set of $K=6$ messages stored on $N=3$ servers.
We index the messages by $1,2,3,4,5,6$, and for notational convenience, we denote them by $a,b,c,d,e,f$, respectively.

Suppose a user wishes to retrieve ${D=4}$ messages, with the demand index set $\mathrm{W}$ belonging to the full collection of all $4$-subsets of $[1:6]$, denoted by $\mathrm{W}_1,\dots,\mathrm{W}_{15}$. The same scheme applies to any subcollection $\mathrm{W}_1,\dots,\mathrm{W}_{E}$ satisfying $\left|\bigcup_{j=1}^{E}\mathrm{W}_j\right|=6$ and $\left|\bigcap_{j=1}^{E}\mathrm{W}_j\right|=0$,
and achieves the same rate, subpacketization level, and shared-randomness ratio.

In this example, we have ${G=\gcd(D,N-1)=2}$, ${L=(N-1)/G=1}$, and ${M=D/G=2}$. 
Thus, the demand index set $\mathrm{W}$ is partitioned into two groups of two indices each. 
For instance, when $\mathrm{W}=\{1,2,3,4\}$, the first group consists of $i_1=1$ and $i_2=2$ and the second group consists of $i_3=3$ and $i_4=4$.

Since each message consists of a single subpacket, i.e., ${L=1}$, we write ${h}_{i}^{(m)}$ instead of ${h}_{i,l}^{(m)}$ for ease of notation, where ${m\in[1:2]}$ and ${i\in[1:6]}$.
The query to the first server is
\[
\mathrm{Q}^{[\mathrm{W}]}_{1}
=
\begin{bmatrix}
h_{1}^{(1)} & h_{2}^{(1)} & h_{3}^{(1)}
& h_{4}^{(1)} & h_{5}^{(1)} & h_{6}^{(1)}
\\
h_{1}^{(2)} & h_{2}^{(2)} & h_{3}^{(2)}
& h_{4}^{(2)} & h_{5}^{(2)} & h_{6}^{(2)}
\end{bmatrix},
\]
where the entries ${h}_{i}^{(m)}$ are generated independently and
uniformly at random from ${\{0,1\}}$.

By construction, ${n=2}$ corresponds to ${(g_2,l_2)=(1,1)}$, and ${n=3}$ corresponds to ${(g_3,l_3)=(G,L)=(2,1)}$. 
Hence, $\mathrm{Q}^{[\mathrm{W}]}_{2}$ is obtained from $\mathrm{Q}^{[\mathrm{W}]}_{1}$ by flipping ${h_{i_1}^{(1)}}$ in the first row and ${h_{i_3}^{(2)}}$ in the second row.
For instance, if ${\mathrm{W}=\{1,2,3,4\}}$, then ${i_1=1}$ and ${i_3=3}$, so
\[
\mathrm{Q}^{[\mathrm{W}]}_{2}
=
\begin{bmatrix}
\overline{h}_{1}^{(1)} & h_{2}^{(1)} & h_{3}^{(1)}
& h_{4}^{(1)} & h_{5}^{(1)} & h_{6}^{(1)}
\\
h_{1}^{(2)} & h_{2}^{(2)} & \overline{h}_{3}^{(2)}
& h_{4}^{(2)} & h_{5}^{(2)} & h_{6}^{(2)}
\end{bmatrix}.
\]

Similarly, $\mathrm{Q}^{[\mathrm{W}]}_{3}$ is obtained from $\mathrm{Q}^{[\mathrm{W}]}_{1}$ by flipping ${h_{i_2}^{(1)}}$ in the first row and ${h_{i_4}^{(2)}}$ in the second row.
Thus, if ${\mathrm{W}=\{1,2,3,4\}}$, then ${i_2=2}$ and ${i_4=4}$, so
\[
\mathrm{Q}^{[\mathrm{W}]}_{3}
=
\begin{bmatrix}
h_{1}^{(1)} & \overline{h}_{2}^{(1)} & h_{3}^{(1)}
& h_{4}^{(1)} & h_{5}^{(1)} & h_{6}^{(1)}
\\
h_{1}^{(2)} & h_{2}^{(2)} & h_{3}^{(2)}
& \overline{h}_{4}^{(2)} & h_{5}^{(2)}
& h_{6}^{(2)}
\end{bmatrix}.
\]

Table~\ref{tab:SPSSR_W1} shows the answers returned by the three servers for the case ${\mathrm{W}=\{1,2,3,4\}}$, where the demand messages are $a$, $b$, $c$, and $d$.
The answer tables for the other candidate demand index sets are obtained by relabeling the messages.

{\renewcommand{\arraystretch}{2}
\begin{table*}[t]
    \centering
    \caption{Answer table for the case $\mathrm{W}=\{1,2,3,4\}$
(demand messages: $a$, $b$, $c$, and $d$)}
 \label{tab:SPSSR_W1}
    \scalebox{1.05}{
    \begin{tabular}{|c|c|c|}
    \hline
     Server 1 & Server 2 & Server 3 \\
    \hline
    \hline
    ${h}_{1}^{(1)}a+{h}_{2}^{(1)}b+{h}_{3}^{(1)}c +{h}_{4}^{(1)}d+{h}_{5}^{(1)}e+{h}_{6}^{(1)}f+\mathrm{S}_1$ & $\overline{h}_{1}^{(1)} a+{h}_{2}^{(1)}b+{h}_{3}^{(1)}c +{h}_{4}^{(1)}d+{h}_{5}^{(1)}e+{h}_{6}^{(1)}f+\mathrm{S}_1$ & ${h}_{1}^{(1)}a+\overline{{h}}_{2}^{(1)}b+{h}_{3}^{(1)}c +{h}_{4}^{(1)}d+{h}_{5}^{(1)}e+{h}_{6}^{(1)}f+\mathrm{S}_1$ \\

    \hline
${h}_{1}^{(2)}a+{h}_{2}^{(2)}b+{h}_{3}^{(2)}c +{h}_{4}^{(2
)}d+{h}_{5}^{(2)}e+{h}_{6}^{(2)}f+\mathrm{S}_2$ & ${h}_{1}^{(2)}a+{h}_{2}^{(2)}b+\overline{{h}}_{3}^{(2)}c +{h}_{4}^{(2)}d+{h}_{5}^{(2)}e+{h}_{6}^{(2)}f+\mathrm{S}_2$ & ${h}_{1}^{(2)}a+{h}_{2}^{(2)}b+{h}_{3}^{(2)}c +\overline{{h}}_{4}^{(2)}d+{h}_{5}^{(2)}e+{h}_{6}^{(2)}f+\mathrm{S}_2$ \\

    \hline
    \end{tabular}
    }
\end{table*}
}

Combining the first combinations retrieved from Servers~1 and~2 enables the user to recover message $a$.
Indeed, the shared secret key $\mathrm{S}_1$ cancels, and all terms corresponding to messages $b$, $c$, $d$, $e$, and $f$ cancel because their coefficients are identical in the two combinations; only the coefficient of message $a$ differs.
Similarly, combining the first combinations retrieved from Servers~1 and~3 enables the user to recover message $b$.
Applying the same argument to the second retrieved combinations, the user recovers message $c$ from Servers~1 and~2 and message $d$ from Servers~1 and~3.
This proves correctness for the case $\mathrm{W}=\{1,2,3,4\}$.

Since each server observes the same query structure for every demand set, and since the entries ${h}_{i}^{(m)}$, for ${m\in[1:2]}$ and ${i\in[1:6]}$, appear independent and uniformly random from that server's perspective, the privacy condition is satisfied.

The answer from any single server reveals no information about the messages, since each retrieved combination is masked by either $\mathrm{S}_1$ or $\mathrm{S}_2$. 
Even after combining answers from multiple servers, the user recovers only the demand messages $a$, $b$, $c$, and $d$, and obtains no information about the interference messages $e$ and $f$. 
Thus, the security condition is satisfied.

The user retrieves two linear combinations from each server, each of
the same size as one message subpacket, while the demand consists of ${D=4}$ messages, each with one subpacket. 
Thus, the retrieval rate is
\[R = \frac{4\cdot 1}{3\cdot 2} =\frac{2}{3},\] 
which matches~\eqref{eq:SPSSR_cap}. Moreover, the shared-randomness ratio is ${M/L=2}$, matching~\eqref{eq:kappa_SPSSR}, and the subpacketization level is ${L=1}$, which matches~\eqref{eq:L*_SPSSR}.

Finally, we compare our scheme with SMPIR schemes under the same parameters
${N=3}$, ${K=6}$, and ${D=4}$.
In this setting, the SMPIR scheme of~\cite{WBU2022} achieves the same rate as our scheme but requires subpacketization level ${36}$.
The same rate can also be achieved by applying the SPIR scheme of~\cite{SJ2019Sym} successively ${D=4}$ times, retrieving one demand message at a time; this scheme requires subpacketization level ${N-1=2}$.
In contrast, our scheme requires only ${L=1}$.
All three schemes have the same shared-randomness ratio ${M/L=2}$.

\section{Open Problems and Future Directions}

Several important questions remain open regarding the optimal tradeoffs among retrieval rate, shared-randomness ratio, and subpacketization level for balanced ${\{0,1\}}$-linear schemes. 
In particular, what are the minimum shared-randomness ratio and the minimum subpacketization level  required to achieve a retrieval rate above a given threshold, and can these minima be achieved simultaneously? 
Similarly, under given upper bounds on shared-randomness ratio and subpacketization level, what is the highest achievable retrieval rate?

In addition to these questions, several broader directions remain for future work. 

The minimum subpacketization level established here applies only to balanced schemes, in which the retrieved combinations are evenly distributed across servers. 
Extending the analysis to asymmetric schemes and determining whether they can achieve smaller subpacketization levels is an interesting direction for future work.

This work studies prior-agnostic SPSSR with equal-length messages, so the results apply to any full-support demand prior but do not cover unequal message lengths. 
A related direction is prior-aware SPSSR with heterogeneous message lengths, motivated by semantic PIR~\cite{VBU2022}.

Another direction is to study SPSSR under relaxed privacy and security requirements. 
In particular, one may protect each demand message separately, rather than protecting the demand set as a whole; in related settings, this relaxation is known to yield more efficient schemes~\cite{HKRS2019,HS2021,HS2022LinCap}. 
Moreover, one may require that the user learn nothing about any subset of interference messages up to a given size, rather than about the full set of interference messages.
The goal would be to characterize how these relaxations affect the optimal retrieval rate, shared-randomness ratio, and subpacketization level.

SPSSR can also be extended to settings where the user has side information about some messages, or functions of them. 
Since such side information can improve the efficiency of PIR and MPIR schemes in various settings~\cite{KGHERS2017,KGHERS2017No0,HKGRS2018,SSM2018,LG2018,HKS2018,HKS2019Journal,HKS2019,KKHS12019,KKHS22019,WHS2024,EH2024}, a natural direction for future work is to investigate whether similar gains are possible in SPSSR.

Finally, this work assumes honest and non-colluding servers, where each server observes only its own query and follows the protocol. 
A possible future direction is to study SPSSR with colluding or adversarial servers, where servers may combine their observations or deviate from the protocol, as considered for classical PIR in~\cite{SJ2018Colluding,BU2019Colluding,ZX2019,WS2017}.

\balance

\bibliographystyle{IEEEtran}
\bibliography{PIR_PC_Refs}

\end{document}